\documentclass[11pt,a4paper]{article}

\usepackage{braket} 
\usepackage{mathtools}
\usepackage{authblk} 
\usepackage{algpseudocode,algorithmicx,algorithm}

\usepackage{mathrsfs}
\usepackage{latexsym,bm}
\usepackage{amsmath,amsfonts,amsmath,amssymb,amsthm}
\usepackage{extarrows}

\usepackage{graphicx,subfigure,epstopdf,float}
\usepackage{enumerate,cases,multirow}
\usepackage{makecell}
\usepackage{caption}
\usepackage{tikz}
\usepackage{pgfplots}

\usepackage{longtable,colortbl,arydshln,threeparttable}
\definecolor{mygray}{gray}{.9}
\usepackage[shortlabels]{enumitem}

\usepackage{indentfirst}
\setlength{\parindent}{2em}
\usepackage[top=25mm,bottom=20mm,left=25mm,right=20mm]{geometry}
\baselineskip=16mm


\usepackage{cite}

\usepackage{listings}

\usepackage{makeidx}        
\usepackage{booktabs}
\usepackage[bookmarksnumbered,colorlinks,citecolor=red,linkcolor=red,hyperindex,linktocpage=true]{hyperref}

\newtheorem{theorem}{Theorem}[section]
\newtheorem{lemma}{Lemma}[section]

\theoremstyle{remark}
\newtheorem{remark}{\bf Remark}[section]
\theoremstyle{assumption}

\numberwithin{equation}{section}

\begin{document}
\title{Quantum simulation of elastic wave equations via Schr$\ddot{\text{o}}$dingerisation}

\author[1,2,3]{Shi Jin   \thanks{shijin-m@sjtu.edu.cn}}
\author[1]{Chundan Zhang   \footnote{Corresponding author.} \thanks{chundanzhang@sjtu.edu.cn}}
\affil[1]{School of Mathematical Sciences,   Shanghai Jiao Tong University, Shanghai 200240, China.}
\affil[2]{Institute of Natural Sciences, Shanghai Jiao Tong University, Shanghai 200240, China.}
\affil[3]{Ministry of Education Key Laboratory in Scientific and Engineering Computing, Shanghai Jiao Tong University, Shanghai 200240, China.}

\date{\today}
\maketitle

\begin{abstract}
In this paper we study quantum simulation algorithms on the elastic wave equations using the Schr$\ddot{\text{o}}$dingerisation method. The Schr$\ddot{\text{o}}$dingerisation method transforms any linear PDEs into a system of Schr\"odinger-type PDEs --with unitary evolution--using the warped phase transformation that maps the equations in one higher dimension. This makes them  suitable for  quantum simulations.  We expore the application in two forms of the elastic wave equations. For the velocity-stress equation in isotropic media, we explore the symmetric matrix form under the external forcing via Schr$\ddot{\text{o}}$dingerisation combined with spectral method. For problems with variable medium parameters, we apply Schr$\ddot{\text{o}}$dingerisation method based on the staggered grid method to simulate velocity and stress fields, and give the complexity estimates. For the wave displacement equation, we transform it into a hyperbolic system and apply the Schr$\ddot{\text{o}}$dingerisation method, which is then discretized by the spectral method and central difference scheme. Details of the quantum algorithms will be provided, along with  the complexity analysis which demontrate exponential quantum advantage in space dimensin over the classical algorithms. 
\end{abstract}

\textbf{Keywords}: 
	Elastic wave equations,  quantum algorithm, Schr$\ddot{\text{o}}$dingerisation method
	
\section{Introduction}

The elastic wave equations describe the propagation  of strain and stress generated inside an elastic object, subject to external disturbance or external force. In the 1820s, Navier proposed a mathematical model to describe the equilibrium and motion of elastic bodies, which opened up the study of elastic waves\cite{navier1827}. Madariaga\cite{madariaga} first modeled velocity-stress formulation in 1976. Elastic wave theory reveals the law of force propagation in the medium, which is of great importance for understanding various types of fluctuations such as volcanism and geodynamics\cite{Quantitativeseismology,bullen1985introduction}. 

Classical methods for solving elastic wave equations include finite difference methods \cite{leveque2007finite}, finite element methods \cite{finite-element}, boundary element methods \cite{boundaryelement} and other numerical methods\cite{FGAChai2018}. However, when one solves such equations in a large domain with the requirement of high accurate or resolution, the classical computation becomes exhaustingly expensive. With the rapid development of quantum computers and chips, quantum algorithms for scientific computing have become increasingly promising\cite{Ber14,Childs20,Cao13}, providing a promising alternative for the large scale simulation of elastic wave equations. 

There are several quantum algorithms designed for solving elastic wave or related equations in the literature. For examples,  
   \cite{PCS30,suau31} employs Hamiltonian simulation and quantum linear system algorithms (QLSA) to solve wave equations in homogenous medium. \cite{sato33} uses the linear combination of Hamiltonian simulation (LCHS)\cite{LCHS} techniques to  solve second-order linear PDEs of nonconservative systems with spatially varying parameters. \cite{matle2} presents a quantum computing framework for the velocity-stress equations with source and loss functions. \cite{quan_cauchy} solves the Cauchy problem for symmetric first order linear hyperbolic systems by digital quantum algorithm and reservoir technique. 

A recently developed  method for quantum simulation of general linear PDEs is Schr$\ddot{\text{o}}$dingerisation\cite{Schrodingerization}. This method maps all linear partial differential equations to one  higher dimension using the wraped transformation, while the new equations become a "Schr$\ddot{\text{o}}$dingerised" or Hamiltonian system that evolves in unitary dynamics, which can be solved on a quantum computer using Hamiltonian simulation. There have been well-established study of Schr$\ddot{\text{o}}$dingerisation method in the complex situation of practical physical problems, for example, boundary conditions\cite{boundarycondition_interface_conditions,Artificial_Boundary_Conditions}, interface problems\cite{boundarycondition_interface_conditions} and PDEs with source terms\cite{inhomogeneous_schr}. It has also been  applied to Maxwell’s equations\cite{Maxwell}, the Fokker-Planck equations\cite{Fokker_Planck}, highly-oscillatory transport equations\cite{oscillatory_transport} and ill-posed problems such as the backward heat equation\cite{ill-posed}. Its versatility extends to multiscale systems\cite{Multiscale}, non-autonomous differential systems\cite{time-dependentHamiltonians}, iterative linear algebra solvers\cite{iterative} and stochastic differential equations\cite{sde}.

In this work, we investigate the applicability of the Schr$\ddot{\text{o}}$dingerisation method across two fundamental models: the symmetric matrix form (SMF)\cite{SMFd3} of velocity-stress equation, and the generalized variable-coefficient counterparts-- the hyperbolic system of displacement equation\cite{ryzhik1996transport}. We only consider periodic boundary conditions and leave other boundary value problems to the future.  The main contributions are the following:

\begin{enumerate}
    \item [(a)] We resolve non-unitary dynamics induced by external forces in symmetric matrix equations through Schr$\ddot{\text{o}}$dingerisation, employing the spectral method for spatial approximation. Furthermore, the Schr$\ddot{\text{o}}$dingerisation method, combined with the staggered grid method, proves particularly effective for variable-coefficient cases, offering exponential advantage in high-dimensional space.  We also  provide a theorem on the gate complexity of the algorithm. We note that the staggered grid method\cite{1986P,Randall} is the most commonly used difference method to solve the velocity-stress equation. 
    \item [(b)] We apply the Schr$\ddot{\text{o}}$dingerisation method to the asymmetry hyperbolic system combined with the spectral method and the central difference method, giving  a theorem showing quantum exponential advantage. For spectral discretization and central difference discretization, we conclude a practically efficient criterion for selecting the region of extended space $p$, which facilitates numerical simulations.

 \end{enumerate} 

 Several numerical experiments will be conducted on classical computers to verify the correctness of these quantum algorithms.

The structure of the paper is as follows: In Section \ref{sec:reviewEWE}, we review the derivation of the elastic wave equations. 
Section \ref{sec:review_Schr} provides an overview of the Schr$\ddot{\text{o}}$dingerisation method, encompassing its discrete and continuous frameworks, the recovery of solutions, and the complexity results. Section \ref{sec:vseq} focuses on the quantum simulation of the SMF equations and variable parameter equations. In Section \ref{sec:QS1}, we apply the Schr$\ddot{\text{o}}$dingerisation method to the hyperbolic system of displacement equations.  Finally, Section \ref{sec:Numricalsimulation} presents numerical examples to verify the correctness of the Schr\"odingerization methods.

In this paper, we use $M$, $N$ and $N_{\xi}$ as the grid numbers of $x-$ (space), $p-$ (extended variable in the Schr\"odingeriation method) and $\xi-$ (the Fourier variable) directions respectively, where $M = 2^m$, $N = 2^n$ and $N_{\xi} = 2^{n_{\xi}}$ are the even integers. We write the sequence $j = {0, 1, 2,\cdots, M-1}$ as $j\in [M]$ and represent the all-zero matrix and identity matrix by the boldfaced $\boldsymbol{0}$ and $\boldsymbol{1}$. The matrix dimensions can be inferred from context.

\section{A review of elastic wave equations}\label{sec:reviewEWE}

The motion of a small displacement $\boldsymbol{u}(t,\boldsymbol{x})$ in  an elastic medium can be described as:
\begin{equation}\label{disp_stress}
    \rho(\boldsymbol{x}) \frac{d^2 \boldsymbol{u}}{d t^2} = L
    \boldsymbol{\sigma} + \boldsymbol{f},
\end{equation}
where $\boldsymbol{u} = (u_1, u_2, u_3)^\top$ is the displacement vector, $ \boldsymbol{\sigma}=(\sigma_{11}, \sigma_{22}, \sigma_{33}, \sigma_{12}, \sigma_{13}, \sigma_{23})^\top $ is the stress vector and $ \boldsymbol{f} = (f_1,f_2,f_3)^\top $ is the body force
in  three dimensions. $L$ is the partial derivative operator,
$$
L=\left(\begin{array}{cccccc}
k_1 & 0 & 0 & k_2 & k_3 & 0 \\
0 & k_2 & 0 & k_1 & 0 & k_3 \\
0 & 0 & k_3 & 0 & k_1 & k_2
\end{array}\right)
$$
where the subscript $i$ denotes the different spatial direction and $k_i = \frac{\partial}{\partial {x_i}}$. In addition, the particle motion produces displacement and deformation, and the relationship between displacement and strain is called the Cauchy equation: 
\begin{equation}\label{disp_strain}
    \boldsymbol{\epsilon}=L^\top \boldsymbol{u}, 
\end{equation}
where $\boldsymbol{\varepsilon}=(\varepsilon_{11}, \varepsilon_{22}, \varepsilon_{33}, \varepsilon_{12}, \varepsilon_{13}, \varepsilon_{23})$ is the strain vector. Hooke's Law defines the stress-strain relationship as a constitutive equation: 
\begin{equation}\label{stress_strain}
    \boldsymbol{\sigma}  = \mathbf{C} \boldsymbol{\epsilon},
\end{equation} 
where $\mathbf{C}$ is the stiffness matrix, which consists of the elastic parameters. In an isotropic medium,  
$$
\mathbf{C}=\begin{pmatrix}
\lambda+2\mu & \lambda & \lambda & & & \\
\lambda& \lambda+2\mu & \lambda & & & \\
\lambda& \lambda& \lambda+2\mu & & & \\
& & & \mu & & \\
& & & & \mu & \\
& & & & &\mu
\end{pmatrix},
$$
where $\lambda$ and $\mu$ are called Lam$\acute{e}$ parameters.
By substituting the Cauchy equation \eqref{disp_strain} and constitutive equation \eqref{stress_strain} into the motion equation \eqref{disp_stress}, the displacement expression of the wave equation can be obtained:
\begin{equation}\label{eq4}
\rho \frac{\partial^2 \mathbf{u}}{\partial t^2}=L C L^T \mathbf{u}+\rho \mathbf{F}.
\end{equation}

Another first-order system describing the elastic wave equation is the velocity-stress equation. By introducing the velocity variable 
\begin{equation}\label{v}
    \boldsymbol{v} = \frac{\partial \boldsymbol{u}}{\partial t},
\end{equation}
the elastic wave equation can be written in the form of the velocity-stress equation
\begin{subequations}\label{velocity-stress}
\begin{align}
     \rho(\boldsymbol{x}) \frac{d \boldsymbol{v}}{d t} &= L
    \boldsymbol{\sigma} + \boldsymbol{F},\\
    \boldsymbol{\sigma}_t  &= \mathbf{C} L^\top \boldsymbol{v}. 
\end{align}
\end{subequations}

The velocity-stress equation can provide both vibration velocity information and stress information. In particular, it takes into account the spatial variation of the Lam$\acute{e}$ parameters and does not directly take derivatives, which is more suitable for the calculation of elastic wave equation in heterogeneous media.
\subsection{The first-order form of the elastic wave equation}
In this paper, we consider the elastic wave equation in isotropic media. In the case of homogeneous media, we use the SMF of the velocity-stress equation, which is suitable for the wave propagation in the curve domain and significantly simplifies the complexity of the formulation and programming\cite{SMFd2,SMFd3}. For the wave displacement equation, we consider the hyperbolic systems typically used for high-frequency analysis of elastic waves\cite{jin2006hamiltonian,ryzhik1996transport}, which, despite asymmetry, reveal quantum advantages through Schr$\ddot{\text{o}}$dingerisation.

\subsubsection{The Symmetric Matrix Form Framework for the
Elastic Wave Equation}\label{sec:2.1.2}
If the medium is homogeneous, the velocity-stress equation has the symmetric matrix form. By separating the different spatial partial derivative operator $L$ into $L_i$,  equation \eqref{velocity-stress} can be written as
\begin{subequations}\label{v-s_withoutforce}
    \begin{align}
        \rho \frac{d \boldsymbol{v}}{dt} &= \sum_{i=1,2,3}  k_i L_i \boldsymbol{\sigma}+ \boldsymbol{f},\\
        \frac{d \boldsymbol{\sigma}}{dt} &= \boldsymbol{C}\sum_{i=1,2,3}  k_i L_i^T \boldsymbol{v},
    \end{align}
\end{subequations}
where $\boldsymbol{C}$ is the stiffness matrix and the density $\rho$ is invertible and independent of $\boldsymbol{x}$. $L_1, L_2$ and $L_3$ are as follows:
\begin{equation}
L_1 = \left[\begin{array}{cccccc}
1 & 0 & 0 & 0 & 0 & 0 \\
0 & 0 & 0 & 1 & 0 & 0 \\
0 & 0 & 0 & 0 & 1 & 0
\end{array}\right], \quad
L_2 = \left[\begin{array}{cccccc}
0 & 0 & 0 & 1 & 0 & 0 \\
0 & 1 & 0 & 0 & 0 & 0 \\
0 & 0 & 0 & 0 & 0 & 1
\end{array}\right], \quad 
L_3=\left[\begin{array}{cccccc}
0 & 0 & 0 & 0 & 1 & 0 \\
0 & 0 & 0 & 0& 0 & 1 \\
0 & 0 & 1 & 0 & 0 & 0
\end{array}\right].
\end{equation}
Let $\tilde{U} = (\boldsymbol{\sigma}/\rho, \boldsymbol{v}) = (\frac{\sigma_{11}}{\rho}, \frac{\sigma_{22}}{\rho}, \frac{\sigma_{33}}{\rho}, \frac{\sigma_{12}}{\rho}, \frac{\sigma_{13}}{\rho}, \frac{\sigma_{23}}{\rho}, v_1, v_2, v_3)^T$ and  $\tilde{\mathbf{f}} =(\boldsymbol{0}_{6\times1},\boldsymbol{f}/\rho) $, then Eqs. \eqref{v-s_withoutforce} can be written as
\begin{equation}\label{v-s_titlde}
    \widetilde{\mathbf{A}}_0 \frac{\partial \widetilde{\mathbf{U}}}{\partial t}=\widetilde{\mathbf{A}}_x \frac{\partial \widetilde{\mathbf{U}}}{\partial x}+\widetilde{\mathbf{A}}_y \frac{\partial \widetilde{\mathbf{U}}}{\partial y}+\widetilde{\mathbf{A}}_z \frac{\partial \widetilde{\mathbf{U}}}{\partial z} + \widetilde{\mathbf{f}},
\end{equation}
where 
\begin{equation}
\widetilde{\mathbf{A}}_x = \begin{pmatrix}
        &L_1^T\\
        L_1&
    \end{pmatrix}, 
\widetilde{\mathbf{A}}_y = \begin{pmatrix}
        &L_2^T\\
       L_2&
    \end{pmatrix},
\widetilde{\mathbf{A}}_z = \begin{pmatrix}
        &L_3^T\\
        L_3&
    \end{pmatrix},
\end{equation}
\begin{equation}
\widetilde{\mathbf{A}}_0 =\begin{bmatrix}
        \rho\boldsymbol{C}^{-1}& \\
         &I_3
    \end{bmatrix} ,\quad
    \rho\boldsymbol{C}^{-1}=\left[\begin{array}{llllll}
    a & b & b & 0 & 0 & 0  \\ 
    b & a & b & 0 & 0 & 0 \\ 
    b & b & a & 0 & 0 & 0 \\
    0 & 0 & 0 & c & 0 & 0  \\ 
    0 & 0 & 0 & 0 & c & 0 \\ 
    0 & 0 & 0 & 0 & 0 & c  
    \end{array}\right].
\end{equation}
Here $a=\frac{c_p^2-c_s^2}{c_s^2\left(3 c_p^2-4 c_s^2\right)}, b=-\frac{c_p^2-2 c_s^2}{c_s^2\left(6 c_p^2-8 c_s^2\right)}, c=\frac{1}{c_s^2}$ are denoted by the longitudinal and transverse medium wave speed,
$$c_p = \sqrt{\frac{\lambda+2\mu}{\rho}}, c_s = \sqrt{\frac{\mu}{\rho}}.$$
As $\widetilde{\mathbf{A}}_0$ is positive definite, ansatz $\widetilde{\mathbf{A}}_0 = P^T \Lambda P$, $P$ is an orthonormal matrix. Thus, we have
$$ \boldsymbol{\Lambda}=\left[\begin{array}{cccccc}\frac{1}{3 c_p^2-4 c_s^2} & & & & & \\ & \frac{1}{2 c_s^2} & & & & \\ & & \frac{1}{2 c_s^2} & & & \\ & & & 1 / c_s^2 & & \\ & & & & 1 / c_s^2 & \\ & & & & & 1 / c_s^2\end{array}\right], \mathbf{P}=\left[\begin{array}{cccccc}\frac{1}{\sqrt{3}} & \frac{1}{\sqrt{3}} & \frac{1}{\sqrt{3}} & 0 & 0 & 0 \\ \frac{1}{\sqrt{2}} & -\frac{1}{\sqrt{2}} & 0 & 0 & 0 & 0 \\ \frac{1}{\sqrt{6}} & \frac{1}{\sqrt{6}} & -\frac{2}{\sqrt{3}} & 0 & 0 & 0 \\ 0 & 0 & 0 & 1 & 0 & 0 \\ 0 & 0 & 0 & 0 & 1 & 0 \\ 0 & 0 & 0 & 0 & 0 & 1\end{array}\right].$$
Denote $M = P^T \Lambda^{-\frac{1}{2}} P$, and pre-multiply $M$ from Eqs. 
 \eqref{v-s_titlde} yields 
\begin{equation}\label{v-s_SMF}
    \frac{\partial \mathbf{U}}{\partial t}=\mathbf{A}_x \frac{\partial \mathbf{U}}{\partial x}+\mathbf{A}_y \frac{\partial \mathbf{U}}{\partial y}+\mathbf{A}_z \frac{\partial \mathbf{U}}{\partial z}+\mathbf{f},
\end{equation}
where $\mathbf{U} = M^{-1}\widetilde{\mathbf{U}}$, $\mathbf{f} =  M^{-1}\widetilde{\mathbf{A}}_0^{-1}\widetilde{\mathbf{f}}$ and $\mathbf{A}_x = M\widetilde{\mathbf{A}}_xM$ is symmetric, as are $\mathbf{A}_y$ and $\mathbf{A}_z$. Eq \eqref{v-s_SMF} is the symmetric matrix form of the elastic wave equations.

\subsubsection{The hyperbolic system of wave displacement equation}

Eqs. \eqref{disp_stress}-\eqref{stress_strain} can be written as follows:
\begin{subequations}\label{disp_stress_iso}
\begin{align}
&\rho(x) \frac{d^2 u_i}{d t^2} = \frac{\partial \sigma_{ij}}{\partial x_j} + f_i,\\
&\sigma_{i j} = \lambda(\boldsymbol{x}) \frac{\partial u_k}{\partial x_k} \delta_{i j} + \mu(\boldsymbol{x})\left(\frac{\partial u_i}{\partial x_j} + \frac{\partial u_j}{\partial x_i}\right),
\end{align}
\end{subequations}
where $\delta_{i j}$ is the  Kronecker delta ($=1$ if $i = j$; $=0$ if $i \neq j$). It can be written as a hyperbolic system with respect to the first derivative of time \cite{ryzhik1996transport}. By Introducing three new variables: 
$$
p=\lambda \operatorname{div} \mathbf{u}, \quad
\xi_i=\dot{u}_i, \quad 
\zeta_{i j}=\mu\left(\frac{\partial u_i}{\partial x_j}+\frac{\partial u_j}{\partial x_i}\right), \quad
i, j=1,2,3.
$$
Eq. \eqref{disp_stress_iso} then becomes:
\begin{equation}\label{Tran1}
\rho \dot{\xi}_i  =\frac{\partial p}{\partial x_i}+\sum_j \frac{\partial \zeta_{i j}}{\partial x_j} + f_i, \quad
\dot{\zeta}_{i j}  =\mu\left(\frac{\partial \xi_i}{\partial x_j}+\frac{\partial \xi_j}{\partial x_i}\right), \quad
\dot{p}  =\lambda \operatorname{div} \boldsymbol{\xi} .
\end{equation}
Let $\boldsymbol{\xi} = (\xi_1,\xi_2,\xi_3)$, $\boldsymbol{\zeta_1} = (\zeta_{11},\zeta_{22},\zeta_{33})$, $\boldsymbol{\zeta_2} = (\zeta_{23},\zeta_{13},\zeta_{12})$, $\boldsymbol{f}=(f_1, f_2, f_3)$. Then eq. \eqref{Tran1} can be written as
\begin{equation}\label{secondorder_EWE}
    \begin{aligned}
        \frac{\partial}{\partial t} \boldsymbol{\xi}
			- \frac{1}{\rho}K(\boldsymbol{k})\boldsymbol{\zeta_1} - \frac{1}{\rho}M(\boldsymbol{k}) \boldsymbol{\zeta_2} - \frac{1}{\rho}\boldsymbol{k}p
			&= - \boldsymbol{f}, \quad
   \frac{\partial}{\partial t} p
			-\lambda \boldsymbol{k}^T\boldsymbol{\xi} =0,\\ 
			\frac{\partial}{\partial t}\boldsymbol{\zeta_1}
			-2\mu K(\boldsymbol{k})\boldsymbol{\xi} &=0,\quad
   \frac{\partial}{\partial t}\boldsymbol{\zeta_2}
			-\mu M(\boldsymbol{k})\boldsymbol{\xi} =0,
    \end{aligned}
\end{equation}
where the matrix $K(\boldsymbol{k})=\operatorname{diag}(\boldsymbol{k}) = \operatorname{diag}\left(k_1, k_2, k_3\right)$ and
\begin{equation}
    M(\boldsymbol{k})=\left(\begin{array}{ccc}
0 & k_3 & k_2 \\
k_3 & 0 & k_1 \\
k_2 & k_1 & 0
\end{array}\right).
\end{equation}
By defining $\mathbf{w} = (\boldsymbol{\xi},\boldsymbol{\zeta_1},\boldsymbol{\zeta_2},p)$, $\boldsymbol{F} = (\boldsymbol{f},\boldsymbol{0}_{7\times 1})$, the hyperbolic matrix version of Eq. \eqref{secondorder_EWE}, which can be used to study the high frequency approximation for elastic wave, then becomes \cite{jin2006hamiltonian}, 
\begin{equation}\label{eq5}
\frac{\partial \mathbf{w}}{\partial t}+L(\mathbf{x}, \mathbf{k})\mathbf{w} + \boldsymbol{F} =0.
\end{equation}
The dispersion matrix $L(\mathbf{x}, \mathbf{k})$ is defined by
\begin{equation}
    L=-\left(\begin{array}{cccc}
0 & K(\boldsymbol{k}) / \rho & M(\boldsymbol{k}) / \rho & \frac{1}{\rho} \boldsymbol{k} \\
2 \mu K(\boldsymbol{k}) & 0 & 0 & 0 \\
\mu M(\boldsymbol{k}) & 0 & 0 & 0 \\
\lambda \boldsymbol{k}^t & 0 & 0 & 0
\end{array}\right).
\end{equation}

We point out that this formulation is valid  only for initial value problems or period boundary conditions. In general,  $u_x |_{x=0/L}$ can not be obtained  from original boundary conditions for other physicalk boundary conditions. This remains an open issue. 

\section{A review of the Schr$\ddot{\text{o}}$dingerisation methed}\label{sec:review_Schr}

In this section, we will briefly review the quantum Schr$\ddot{\text{o}}$dingerisation method for solving linear partial differential equations, including both the  discrete and continous Schr$\ddot{\text{o}}$dingerisation methods, and give two recovery options after obtaining the solution. Finally, the complexity analysis is conducted.

For any linear partial differential equations, it can be transformed into a linear partial differential equation with respect to the first order of time by introducing new variables or other transformations, and then ordinary differential equations can be obtained after spatial discretizations: 
\begin{equation}\label{ode}
    \frac{d \boldsymbol{u}}{dt} = A\boldsymbol{u} + \boldsymbol{b},\quad \boldsymbol{u}(0) = \boldsymbol{u}_0, 
\end{equation}
where $\boldsymbol{u} = \boldsymbol{u}(t)\in \mathbb{C}^n$, $A\in \mathbb{C}^n\times \mathbb{C}^n$ is obtained after spatial discretization and can be time-dependent. $\boldsymbol{b}\in \mathbb{C}^n$ may arise from boundary conditions or external source terms. In order to simulate \eqref{ode} on a quantum computer, one first puts it into a  homogeneous form: 
\begin{equation}\
    \frac{d}{dt}\begin{pmatrix}
        \boldsymbol{u}\\
        \boldsymbol{r}
    \end{pmatrix} = \begin{pmatrix}
        A&\operatorname{diag}(\boldsymbol{b})/\epsilon\\
        \boldsymbol{0}&\boldsymbol{0}
    \end{pmatrix}\begin{pmatrix}
        \boldsymbol{u}\\
        \boldsymbol{r}
    \end{pmatrix},
\end{equation}
where $\boldsymbol{r}= \epsilon \boldsymbol{1}$ is a constant vector and $\epsilon \sim \Vert \boldsymbol{b} \Vert_\infty$ which can eliminate the influence of inhomogeneous terms on the eigenvalues of the new coefficient matrix\cite{inhomogeneous_schr}.

Without loss of generality, let's consider Eq. \eqref{ode} with $\boldsymbol{b}=0$ and use the  decomposition:
\begin{equation}\label{ode_H}
    \frac{d}{dt}\boldsymbol{u} = A\boldsymbol{u} = (H_1 + iH_2)\boldsymbol{u}, 
\end{equation}
where the Hermitians  $H_1 = \frac{A+A^\dagger}{2}$, and $H_2 = \frac{A-A^\dagger}{2i}$. Considering the imaginary unit $i$ in Schr$\ddot{\text{o}}$dinger equation, one  can introduce a new variable $p$ and use the warped phase transformation. Let $v = e^{-|p|}\boldsymbol{u}$, Eq. \eqref{ode_H} is equivalent to 
\begin{equation}\label{p_v}
    \frac{dv}{dt} = -H_1\partial_p v +i H_2 v,\quad v(0) = e^{-|p|}\boldsymbol{u}(0). 
\end{equation}
In \cite{ill-posed}, a smooth initial data $v(0) = g(p)\boldsymbol{u}(0)$ is used to improve the algorithm accuracy,
\begin{equation}\label{g(p)}
g(p)= \begin{cases}h(p) & p \in(-\infty, 0] \\ e^{-p} & p \in(0, \infty)\end{cases}\\
\end{equation}
with $h(p) \in L^2((-\infty, 0))$.

\subsection{The Schr$\ddot{\text{o}}$dingerisation method}

Suppose that the $p$ domain is $[-\pi L,\pi L]$, partitioned by mesh points  $-\pi L = p_0 < p_1 <...< p_N = \pi L$. The Fourier basis functions are: 
\begin{equation}
    \phi_l(p) = e^{i\mu_l(p+\pi l)}, \quad \mu_l = (l-N/2)/L, \quad l\in[N].
\end{equation}
Taking the discrete Fourier transform on Eq. \eqref{p_v} in the $p$ domain gives
\begin{equation}\label{DFT_P}
\begin{cases}
    \frac{d }{dt}\boldsymbol{v}_h&= -i(H_1\otimes P_{\mu})\boldsymbol{v}_h+i(H_2\otimes I)\boldsymbol{v}_h,\\
    \hat{\boldsymbol{v}}_h(0) &= \boldsymbol{u}(0)\otimes \sum_{j=0}^N e^{-|p_j|}\ket{j}.
\end{cases}
\end{equation}
In order to reduce the complexity in quantum simulations, one can use the inverse quantum Fourier transform to change the variables, thus reduce the sparsity of the Hamiltonian. Let $\boldsymbol{c} = I\otimes \Phi^{-1} \boldsymbol{v}_h$, one gets
\begin{equation}\label{DFT_D}
    \begin{cases}
    \frac{d }{dt}\boldsymbol{c}&= -i(H_1\otimes D_{\mu})\boldsymbol{c}+i(H_2\otimes I)\boldsymbol{c},\\
    \boldsymbol{c}(0) &= I\otimes \Phi^{-1} \boldsymbol{v}_h(0).
\end{cases}
\end{equation}
This Schr\"odingeirzed system is suitable for quantum simulation  with  Hamiltonian $H=-i(H_1\otimes D_{\mu}-H_2\otimes I)$. Here $P_{\mu}$ and $D_{\mu}$ are the Fourier correlation matrix of the variable $p$, and the details of the Fourier spectral method can be found in Sec \ref{sec:SMFofvseq} of this paper.
\subsection{Recovery of solutions}
One can obtain quantum state $\ket{\boldsymbol{v}_h(T)}$ by applying the Quantum Fourier Transform (QFT)  on the solution $\ket{\boldsymbol{c}(T)}$ of Eq. \eqref{DFT_D},
\begin{equation}\label{recovery_formula}
     \boldsymbol{v}_h(T) = (I\otimes \Phi) \boldsymbol{c}(T), 
\end{equation}
where $\boldsymbol{v}_h$ is the lattice value of $v$ after discretizing $x$ and $p$. Depending on the definition of $v$, we can choose either multiplying by $e^p$  at one point of $p$, or integrating to recover $\boldsymbol{u} $:
\begin{equation}\label{p^*}
    \boldsymbol{u} = e^{p^*} v, \quad \text{for any} \quad  p\geq p^*; \quad \text{or} \quad \boldsymbol{u} = \int_{p^*}^{\infty} v(p) dt
\end{equation}
where $p^* \geq \max \left\{\lambda_n\left(H_1\right) T, 0\right\}$. 

\subsection{The complexity}
\begin{lemma}\label{lem:ham_simu complexity}
\cite{OptimalHS} For evolution time $T$ and error $\epsilon$, a time-independent $s$-sparse Hamiltonian $H$ on $m_H$ qubits subject to $m_e$ bits of precision, has the query
complexity with respect to $\tau = s\|H\|_{max} T$ as  
\begin{equation}
    N_{query} = \mathscr{O}\bigg(\tau+ \log(1/\delta)/(\log\log(1/\delta))\bigg),
\end{equation}
which achieves known lower bounds additively with an additional
primitive gate complexity
\begin{equation}
    N_{gate} = \mathscr{O}((m_H + m_e \operatorname{polylog}(m_e))N_{query}).
\end{equation}
\end{lemma}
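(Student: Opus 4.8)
The plan is to prove this through the quantum signal processing (QSP) and qubitization framework, which is the canonical route to optimal sparse Hamiltonian simulation. First I would construct, from the standard sparse-access oracles to $H$, a quantum walk operator $W$ by qubitization: using $\mathscr{O}(1)$ oracle queries one obtains a block-encoding of $H/(s\|H\|_{max})$, and the associated walk operator $W$ has the property that, restricted to each two-dimensional invariant subspace attached to an eigenvector of $H$, it acts as a rotation by $\pm\theta_j$ with $\cos\theta_j = \lambda_j/(s\|H\|_{max})$. Thus the spectrum of $W$ encodes $\arccos$ of the normalized eigenvalues of $H$, and crucially a single Hamiltonian-oracle query suffices per application of $W$, which is what ties the query count to the degree of the polynomial we apply next.

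The second step is to reduce the simulation of $e^{-iHT}$ to approximating the scalar function $e^{-i\tau\cos\theta}$ by a Laurent polynomial in $e^{i\theta}$, where $\tau = s\|H\|_{max}T$. Here I would invoke the Jacobi--Anger expansion
$$e^{-i\tau\cos\theta} = \sum_{k=-\infty}^{\infty} (-i)^{|k|} J_{|k|}(\tau)\, e^{ik\theta}$$
and truncate it at degree $d$. Using the rapid decay of the Bessel functions $J_k(\tau)$ once $k$ exceeds $\tau$, the truncation error falls below $\delta$ as soon as $d = \Theta\!\big(\tau + \log(1/\delta)/\log\log(1/\delta)\big)$, which is exactly the claimed $N_{query}$ since each monomial $e^{ik\theta}$ costs one further application of $W$. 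QSP then realizes the degree-$d$ transformation of the eigenphases of $W$ by interleaving the $d$ walk steps with single-qubit rotations whose angles are fixed by the polynomial coefficients, with a standard argument handling the even (real) and odd (imaginary) parts of $e^{-i\tau\cos\theta}$ separately and recombining them into the unitary.

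The third step is the gate count together with the optimality assertion. Each of the $N_{query}$ walk steps is dressed with $\mathscr{O}(1)$ QSP rotation gates, and implementing the oracle arithmetic and the rotation angles to $m_e$ bits of precision costs $\mathscr{O}\!\big(m_H + m_e\operatorname{polylog}(m_e)\big)$ primitive gates per step, yielding the stated $N_{gate}$. For the statement that the query count saturates the lower bound additively, I would combine the no-fast-forwarding theorem, which forces $\Omega(\tau)$ queries, with the $\Omega\!\big(\log(1/\delta)/\log\log(1/\delta)\big)$ bound inherited from the hardness of distinguishing nearly orthogonal output states; the two bounds add, matching the upper bound.

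I expect the main obstacle to be the sharp degree estimate in the Jacobi--Anger truncation. Proving that degree $d = \tau + \mathscr{O}\!\big(\log(1/\delta)/\log\log(1/\delta)\big)$ suffices, rather than the weaker $\mathscr{O}(\tau + \log(1/\delta))$, requires precise tail bounds on $J_k(\tau)$ in the regime $k > e\tau/2$, where $J_k(\tau)$ decays like $(e\tau/2k)^k$. Reconciling this super-geometric decay with the target error $\delta$ is precisely what produces the characteristic $\log(1/\delta)/\log\log(1/\delta)$ factor, and it is the one place where a naive estimate yields the wrong exponent.
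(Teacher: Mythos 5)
The paper does not actually prove this lemma --- it imports the result wholesale by citation from \cite{OptimalHS} (optimal Hamiltonian simulation by quantum signal processing), so there is no internal proof to compare against. Your sketch correctly reconstructs that cited reference's own argument --- the qubitization walk operator with $\cos\theta_j = \lambda_j/(s\|H\|_{\max})$, the Jacobi--Anger expansion truncated using the super-geometric Bessel tail decay $J_k(\tau)\sim (e\tau/2k)^k$ to get degree $\tau + \mathscr{O}\bigl(\log(1/\delta)/\log\log(1/\delta)\bigr)$, the QSP phase-rotation implementation with per-step cost $\mathscr{O}\bigl(m_H + m_e\operatorname{polylog}(m_e)\bigr)$, and the additive matching of the no-fast-forwarding and precision lower bounds --- so your proposal is correct and takes essentially the same route as the source of the result.
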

\begin{remark}
Another Schr\"odingerisation version is using the continuous Fourier transform to Eq. \eqref{p_v}:
    \begin{equation}\label{CFT_xi}
\begin{cases}
    \frac{d }{dt}\hat{v}&= -i\xi H_1\hat{v}+iH_2\hat{v},\\
    \hat{v}(0) &= \frac{1}{\pi(1+\xi^2)}\boldsymbol{u}(0),
\end{cases}
\end{equation}
where $\xi \in \mathbb{C}$ is the variable in the frequency domain corresponding to $p$. Eqs. \eqref{CFT_xi} can be simulated on analog quantum devices in the qumode representation\cite{analog}. If truncating and discretizing $\xi$, $-X= \xi_0\leq \xi_1 \cdots \leq \xi_{N_{\xi}} = X$, one can obtain a Hamiltonian system that is equivalent Eqs. \eqref{DFT_D}:
\begin{equation}\label{CFT_D}
\begin{cases}
    \frac{d }{dt}\hat{\boldsymbol{v}}_h&= -i(H_1\otimes D_{\xi})\hat{\boldsymbol{v}}_h+i(H_2\otimes I)\hat{\boldsymbol{v}}_h,\\
    \hat{\boldsymbol{v}}_h(0) &= \boldsymbol{u}(0) \otimes \sum_{j=0}^{N_{\xi}} \frac{\ket{j}}{\pi(1+\xi_j^2)}.
\end{cases}
\end{equation}

In the continuous version, the recovery of the solution requires the inverse Fourier transform first. Although there will be truncation errors and numerical integration errors during the process, it can be proved that the complexity of the two is still the same. All subsequent derivations in this paper are based on the discrete version.
\end{remark}
\section{Quantum simulation of the first-order velosity-stress equation}\label{sec:vseq}

In this section, we will use the Schr$\ddot{\text{o}}$dingerisation method for quantum simulation of elastic waves using the velocity-stress equations. The first section focuses on the symmetric matrix form, and the second section considers variable coefficient problem.

\subsection{Quantum simulation of SMF of elastic wave equation \eqref{v-s_SMF} by spectral method}\label{sec:SMFofvseq}

We consider the symmetric matrix form of elastic wave equations with time-independent source terms. We can treat the spatial direction similarly to the $p$ direction, using the Fourier spectral discretization. 

Consider the three dimensional domain $\boldsymbol{x} \in \Omega = [a,b]^3$. Denote the discrete grid points as $\boldsymbol{x_j} = (x_{j_1}, y_{j_2},z_{j_3})$, where 
\begin{equation}
    x_{j_1} =a+j_1\Delta x, y_{j_2}=a+j_2\Delta y, z_{j_3}=a+j_3\Delta z,\quad j_1,j_2,j_3 \in [ M ].
\end{equation}
The discrete solution and the discrete source term at a given spatial grid point are:
\begin{equation}
    \boldsymbol{\mathbf{U}}_{i,\boldsymbol{j}} = \mathbf{U}_i(t,\boldsymbol{x_j}), \quad\mathbf{f}_{i,\boldsymbol{j}} = \mathbf{f}_i(\boldsymbol{x_j}),\quad \boldsymbol{j} = \ket{j_1} \otimes \ket{j_2} \otimes \ket{j_3}, \quad i\in [9].
\end{equation}
We use $\boldsymbol{\mathbf{U}}_h$ and $\mathbf{f}_h$ to collect function values:
\begin{equation}
    \boldsymbol{\mathbf{U}}_h = \sum_{i=0}^{8}\sum_{j_1,j_2,j_3 = 0}^{M-1} \mathbf{U}_i(t,\boldsymbol{x_j})\ket{i},\quad
\mathbf{f}_h = \sum_{i=0}^{8}\sum_{j_1,j_2,j_3 = 0}^m\mathbf{f}_i(\boldsymbol{x_j})\ket{i}.
\end{equation}
The one-dimensional basis functions of the Fourier spectral methods are  given by
$$
\phi_l(x)=\mathrm{e}^{\mathrm{i} \mu_l(x-a)}, \quad \mu_l=\frac{2 \pi l}{b-a}, \quad l=-M/2, -M/2+1, \cdots, M/2-1.
$$
For convenience, we adjust the index as:
\begin{equation}
    \phi_l(x)=\mathrm{e}^{\mathrm{i} \mu_l(x-a)}, \quad \mu_l=\frac{2 \pi(l-M/2-1)}{b-a}, \quad 1 \leq l \leq M.
\end{equation}
The matrix involved is
\begin{equation}\label{fourier_spectral_D}
    \Phi=\left(e^{i \mu_l x_j}\right)_{M \times M}, \quad D_\mu=\operatorname{diag}\left\{\mu_1, \cdots, \mu_M\right\}.
\end{equation}
The momentum operator in quantum mechanics can be discretized as a matrix:
\begin{equation}
    \hat{p}q(x) = -i\partial_x q(x),\quad \hat{p}^d\boldsymbol{q} = \Phi D_\mu \Phi^{-1} \boldsymbol{q} =: P_{\mu}\boldsymbol{q}.
\end{equation}
For $d$ dimensions, the basis functions are written as $\phi_{\boldsymbol{l}}(\boldsymbol{x})=\phi_{l_1}(x_1)\cdots\phi_{l_d}(x_d)$, $\boldsymbol{l} = (l_1,\cdots, l_d)$, and the momentum operator is
\begin{equation}
    \hat{P}_lq(\boldsymbol{x}) = -i\partial_{x_l} q(\boldsymbol{x}),\quad \hat{P}_l^d\boldsymbol{q} = \left(I^{\otimes^{l-1}} \otimes P_\mu \otimes I^{\otimes^{d-l}}\right) \boldsymbol{q} =: \boldsymbol{P}_l\boldsymbol{q}.
\end{equation}
Note that
\begin{equation}
    \left(\Phi^{\otimes^d}\right)^{-1} \boldsymbol{P}_l \Phi^{\otimes^d}=I^{\otimes^{l-1}} \otimes D_\mu \otimes I^{\otimes^{d-l}}=: \boldsymbol{D}_l .
\end{equation}
Taking the discrete Fourier transform of the spatial variables of \eqref{v-s_SMF},
\begin{equation}\label{SMF_spectral_P}
    \frac{d}{dt} \boldsymbol{\mathbf{U}}_h = i(\mathbf{A}_x\otimes \boldsymbol{P}_1+\mathbf{A}_y\otimes \boldsymbol{P}_2+\mathbf{A}_z\otimes \boldsymbol{P}_3) \boldsymbol{\mathbf{U}} + \mathbf{f}_h,
\end{equation}
where $\boldsymbol{P}_i = I^{\otimes^{i-1}}\otimes P_{\mu} \otimes I^{\otimes^{3-i}}$. One can improve the efficiency of quantum simulations by reducing the sparsity of $P_{\mu}$, $ P_{\mu} = \Phi D_{\mu} \Phi^{-1}$. Here $P_{\mu}$ and $D_{\mu}$ are the matrices in the discrete $x$-variables. After the substitution of variables, Eq. \eqref{SMF_spectral_P} becomes
\begin{equation}\label{SMF_spectral_D}
     \frac{d}{dt} \hat{\boldsymbol{\mathbf{U}}}_h = i(\mathbf{A}_x\otimes \boldsymbol{D}_1+\mathbf{B}_x\otimes \boldsymbol{D}_2+\mathbf{C}_x\otimes \boldsymbol{D}_3) \hat{\boldsymbol{\mathbf{U}}}_h+{\color{red}\hat{\mathbf{f}}_h}\triangleq A\hat{\boldsymbol{\mathbf{U}}}_h+\hat{\mathbf{f}}_h,
\end{equation}
where $\hat{\mathbf{f}}_h = (I\otimes \Phi)\mathbf{f}_h$.

\begin{remark}
    If $\mathbf{f}_h=\mathbf{0}$, the equation \eqref{SMF_spectral_D} is a Hamiltonian system, which, with the assumptions of homogeneous media, 
  in absence of external sources, and periodic propagation (as used in our derivation), can be directly simulated using Hamiltonian Simulation. However, as we demonstrate below, the Schr$\ddot{\text{o}}$dingerization approach remains essential for solving the systems with source terms and variable coefficients, where traditional Hamiltonian simulations may not suffice.

\end{remark}
\subsubsection{Schr$\ddot{\text{o}}$dingerisation Simulation}\label{sec:vs1SchrSim}
Now, the above ODEs can be rewritten as a
homogeneous system:
\begin{equation}\label{SMF_spectral_ode}
\frac{d}{dt} \begin{pmatrix}
    \hat{\boldsymbol{\mathbf{U}}}_h\\
    \boldsymbol{r}
\end{pmatrix} = \begin{pmatrix}
    A&\operatorname{diag}(\hat{\mathbf{f}}_h)/c\\
    \boldsymbol{0}&\boldsymbol{0}
\end{pmatrix}\begin{pmatrix}
    \hat{\boldsymbol{\mathbf{U}}}_h\\
    \boldsymbol{r}
\end{pmatrix},\quad 
\begin{pmatrix}
    \hat{\boldsymbol{\mathbf{U}}}_h(0)\\
    \boldsymbol{r}(0)
\end{pmatrix} = \begin{pmatrix}
     (I\otimes \Phi)\mathbf{U}_h(0)\\
     \boldsymbol{r}_0
\end{pmatrix},
\end{equation}
where $\boldsymbol{r}_0=c\sum_{i \in[9M^3]}\ket{j}, c \sim \|\hat{\mathbf{f}}_h\|_{\infty}$. Simplify \eqref{SMF_spectral_ode} as:
\begin{equation}\label{SMF_spectral_ode_finial}
\frac{d}{dt} \boldsymbol{u} =  H_1\boldsymbol{u}+iH_2\boldsymbol{u}.\quad 
\boldsymbol{u}(0) = \begin{pmatrix}
     \hat{\boldsymbol{\mathbf{U}}}_h(0)\\
     \boldsymbol{r}(0)\\
     \boldsymbol{1}_{12M^3\times 1}
\end{pmatrix}.
\end{equation}
The addition of $\boldsymbol{1}$ to the vector ensures that the dimension aligns with a power of $2$, enabling efficient encoding into quantum qubits. The matrices $H_1$ and  $H_2$ can be written as follows:
\begin{equation}\label{SMF_H1H2}
H_1 = \frac{1}{2}\begin{pmatrix}
    \boldsymbol{0}&\operatorname{diag}(\hat{\mathbf{f}}_h)/c&\boldsymbol{0}\\
    \operatorname{diag}(\hat{\mathbf{f}}_h^{\dagger})/c&\boldsymbol{0}&\boldsymbol{0}\\
    \boldsymbol{0}&\boldsymbol{0}&\boldsymbol{0}
\end{pmatrix},
H_2 = \frac{1}{2i}\begin{pmatrix}
    2A&\operatorname{diag}(\hat{\mathbf{f}}_h)/c&\boldsymbol{0}\\
    -\operatorname{diag}(\hat{\mathbf{f}}_h^{\dagger})/c&\boldsymbol{0}&\boldsymbol{0}\\
    \boldsymbol{0}&\boldsymbol{0}&\boldsymbol{0}
\end{pmatrix}.
\end{equation}
The dimension of zero matrices can be inferred from the context, so it is omitted.

At this point, we can apply the Schr$\ddot{\text{o}}$dingerisation method on Eq. \eqref{SMF_spectral_ode_finial}. Taking the warped transformation and then taking the discrete Fourier transform of the new variable $p$, one can get the following Hamiltonian systems:
\begin{subequations}\label{SMF_spectral_Schr}
    \begin{align}
        \frac{d}{dt} \boldsymbol{c} &= -i[(H_1\otimes D_p)-(H_2\otimes I)] \boldsymbol{c}=-iH_s^d \boldsymbol{c},\label{vs_spectral_Schr_d}
    \end{align}
\end{subequations}
where $H_1$ and $H_2$ are given by \eqref{SMF_H1H2}. $D_p$ is given by \eqref{fourier_spectral_D}. The above process can be summarized as follows:
    \begin{equation*}
     \boldsymbol{\mathbf{U}}_h \xrightarrow{\text { DFT on } x} \hat{\boldsymbol{\mathbf{U}}}_h \xrightarrow{\text{ homogeneous}} \boldsymbol{u}\xrightarrow[{v = g(p)\boldsymbol{u}}]{\text { warped trasformation }} v \xrightarrow{\text { discrete on } p} \boldsymbol{v}_h \xrightarrow{\text { DFT on } p} \boldsymbol{c},
    \end{equation*}
One can obtain the relations:
\begin{subequations}
    \begin{align}
        \boldsymbol{c}&=(\boldsymbol{\Phi}_{x}^{-1}\boldsymbol{\mathbf{U}}_h)\otimes(\Phi_p^{-1}\sum_{j=0}^{N-1} e^{-|p_j|}\ket{j}).
    \end{align}
\end{subequations}

\begin{theorem}\label{thm: SMFeqcomplexity}
In the $d$-dimensional case, given the initial quantum state $\ket{\boldsymbol{u}(0)}$, assume that the spatial grid number and extended $p$ grid number are $M = 2^m$ and $N=2^n$ respectively, the solution is smooth
enough such that $\boldsymbol{\mathbf{U}} \in H^r(\Omega)$, $ r \geq 2$. Over the evolution time $T$, the gate complexity of quantum simulation of \eqref{SMF_spectral_Schr} with success probability at least $1-2\epsilon$ is 
    \begin{equation}\label{w_spectral_DFT_complexity}
        N_{Gate}=  \mathscr{O}((\lceil\log{(d^2+3d)}\rceil+\frac{d}{r}\log\frac{1}{\epsilon})\epsilon^{-1}T).
    \end{equation}
\end{theorem}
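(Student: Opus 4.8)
The plan is to reduce the entire statement to a single application of the Hamiltonian-simulation cost estimate of Lemma \ref{lem:ham_simu complexity} to the Schr\"odingerised system \eqref{SMF_spectral_Schr}, whose Hamiltonian is $H_s^d=(H_1\otimes D_p)-(H_2\otimes I)$ with $H_1,H_2$ from \eqref{SMF_H1H2}. Before invoking the lemma I would set up an error budget splitting the target accuracy $\epsilon$ into three independent pieces: the Fourier spectral error in the $d$ physical directions, the error of the warped-phase discretization in the extended variable $p$, and the Hamiltonian-simulation error $\delta$. The three governing parameters of the lemma are the register size $m_H$, the sparsity $s$, and the max-norm $\|H_s^d\|_{\max}$; each must be expressed in terms of $\epsilon$, $T$ and $d$.

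First I would fix the spatial resolution. Since $\mathbf{U}\in H^r(\Omega)$ with $r\ge 2$, the Fourier spectral approximation in each of the $d$ directions converges like $\mathscr{O}(M^{-r})$, so requiring this to be $\lesssim\epsilon$ forces $M\sim\epsilon^{-1/r}$, i.e.\ $m=\log M\sim\frac1r\log\frac1\epsilon$ and hence $d\,m\sim\frac dr\log\frac1\epsilon$ spatial qubits. Next I would fix the $p$-resolution. The recovery \eqref{p^*} requires the domain to reach $p^\ast\gtrsim\lambda_{\max}(H_1)T$; since $H_1$ in \eqref{SMF_H1H2} contains only the normalized forcing, $\|H_1\|=\mathscr{O}(1)$, so $p^\ast\sim T$ and the half-width $L$ must grow like $T$. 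The warped data $e^{-|p|}\boldsymbol u$ is only Lipschitz (a corner at $p=0$), so the Fourier discretization in $p$ is effectively first order and needs a mesh $\Delta p\sim\epsilon$. Using the identity $\mu_{\max}\Delta p=\pi$ between the largest eigenvalue of $D_p$ in \eqref{fourier_spectral_D} and the mesh, this gives $\|D_p\|_{\max}=\pi/\Delta p\sim\epsilon^{-1}$ independently of $L$, which is the origin of the $\epsilon^{-1}$ factor; the grid count itself is $N=2\pi L/\Delta p\sim T\epsilon^{-1}$, contributing only $\log N\sim\log(T/\epsilon)$ qubits.

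I would then read off the three parameters. The register size is $m_H=\lceil\log(d^2+3d)\rceil+d\,m+\log N$: the velocity-stress vector has $d+\tfrac{d(d+1)}2=\tfrac{d^2+3d}2$ entries, which after homogenization (doubling the physical register) and padding to a power of two occupy $\lceil\log(d^2+3d)\rceil$ qubits, while $\log N\sim\log(T/\epsilon)$ is absorbed into the $\frac dr\log\frac1\epsilon$ term. Because the two QFT changes of variables in \eqref{DFT_D} and \eqref{SMF_spectral_D} turn both the spatial operators $\boldsymbol P_i$ and the $p$-operator into the diagonal matrices $\boldsymbol D_i$ and $D_p$, the only off-diagonal structure of $H_s^d$ comes from the $\tfrac{d^2+3d}2$-dimensional coefficient blocks $\mathbf A_x,\mathbf A_y,\mathbf A_z$ and the homogenization blocks, so $s=\mathscr{O}(1)$ for fixed $d$. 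For the norm, $H_1\otimes D_p$ contributes $\|H_1\|_{\max}\|D_p\|_{\max}\sim\epsilon^{-1}$ while $H_2\otimes I$ contributes only the spatial scale $\|A\|_{\max}\sim M\sim\epsilon^{-1/r}$; since $r>1$ the former dominates and $\|H_s^d\|_{\max}\sim\epsilon^{-1}$. Hence $\tau=s\|H_s^d\|_{\max}T=\mathscr{O}(\epsilon^{-1}T)$, and taking $\delta\sim\epsilon$ makes $\tau$ dominate $\log(1/\delta)/\log\log(1/\delta)$, so $N_{\mathrm{query}}=\mathscr{O}(\epsilon^{-1}T)$. Multiplying by the per-query cost $m_H+m_e\,\mathrm{polylog}(m_e)$ of Lemma \ref{lem:ham_simu complexity} and absorbing the precision term yields exactly \eqref{w_spectral_DFT_complexity}; the success probability $1-2\epsilon$ follows from the recovery \eqref{recovery_formula}--\eqref{p^*}, where projecting onto the $p\ge p^\ast$ subspace succeeds up to the accumulated discretization and simulation errors.

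The main obstacle, and the step I would spend the most care on, is the $p$-direction error analysis: showing rigorously that the non-smoothness of $e^{-|p|}$ limits the Fourier convergence to first order, so that the mesh $\Delta p\sim\epsilon$ forces $\|D_p\|_{\max}=\Theta(\epsilon^{-1})$, and that with the $T$-dependent domain $L\sim T$ this genuinely dominates the spatial contribution $M\sim\epsilon^{-1/r}$ while the grid count $N\sim T\epsilon^{-1}$ affects only the (logarithmic) qubit count. Equally delicate is the bookkeeping that keeps $d$ inside the logarithm: one must verify that the coefficient-block sparsity, the homogenization padding, and the precision bits $m_e$ contribute at most $\mathscr{O}(1)$ factors in $\epsilon$, so that the surviving $d$-dependence is exactly the $\lceil\log(d^2+3d)\rceil$ qubit count and the $\frac dr\log\frac1\epsilon$ from the $d$ tensor copies of the spatial grid; everything else only permutes or rescales by constant factors and does not touch the leading asymptotics.
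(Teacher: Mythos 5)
Your proposal is correct and follows essentially the same route as the paper's proof: fix $M\sim\epsilon^{-1/r}$ from the $H^r$ spectral estimate and $N\sim\epsilon^{-1}$ from the non-smoothness of $e^{-|p|}$, observe that $\|H_s^d\|_{\max}=\max\{\mathscr{O}(\|D_p\|_{\max}),\mathscr{O}(M)\}=\mathscr{O}(\epsilon^{-1})$ with $\mathscr{O}(1)$ sparsity, invoke Lemma~\ref{lem:ham_simu complexity} with $\tau=s\|H_s^d\|_{\max}T$, and multiply by the qubit count $\lceil\log(d^2+3d)\rceil+\mathscr{O}\bigl((\tfrac{d}{r}+1)\log\tfrac{1}{\epsilon}\bigr)$. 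Your treatment is in fact slightly more careful than the paper's in two places — you track the $T$-dependence of the $p$-domain ($L\sim T$, so $\|D_p\|_{\max}\sim\epsilon^{-1}$ while $N\sim T\epsilon^{-1}$ only enters logarithmically) and you justify the $1-2\epsilon$ success probability via the recovery step — but these are refinements of the same argument, not a different one.
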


\begin{proof}
In spectral method, the error depends on the smoothness of the solution. Therefore, there is $r$-order accuracy in the $x$-direction which implies
\begin{equation}
    M=\mathscr{O}(\epsilon^{-\frac{1}{r}}).
\end{equation}
In the $p$-direction, since the function is only continuous but not continuously differentiable, the total number of discrete lattice points is 
\begin{equation}
N=\mathscr{O}({\epsilon}^{-1}).
\end{equation}
The Hamiltonians $H_s^d$ in Eqs. \eqref{SMF_spectral_Schr} inherits the sparsity of $A$, so $s =\sum_{i=1,2,3}s(L_i) = 3$. Hence 
\begin{gather*}
    \|H_s^d\|_{\max}=\max\{\|H_1\otimes D_p\|_{\max},\|H_2\|_{\max}\}=\max\{\mathscr{O}(N),\mathscr{O}(M)\} = \mathscr{O}(\epsilon^{-1}).
\end{gather*}

So one have $N_{query} = \mathscr{O}(s\epsilon^{-1}T+\frac{\log{\epsilon^{-1}}}{\log{\log{\epsilon^{-1}}}})$ by Lemma~\ref{lem:ham_simu complexity}. The number of qubits is 
\begin{gather*}
    m_{H_s^d} = \lceil \log{[(d^2+3d)M^dN]}\rceil = \lceil \log{(d^2+3d)}\rceil+(\frac{d}{r}+1) \mathscr{O}(\log \frac{1}{\epsilon}),
\end{gather*}
where $\frac{d^2+3d}{2}$ is the dimension of the SMF and the precision of matrix elements satisfies $\epsilon = 2^{-m_b}$. The gate complexity of the Hamiltonian simulation is
\begin{gather*}
    N_{Gate}=  \mathscr{O}[s\epsilon^{-1}T+\frac{\log\frac{1}{\epsilon}}{\log{\log\frac{1}{\epsilon}}}][\log{(d^2+3d)}+(\frac{d}{r}+1)\log\frac{1}{\epsilon}+\log\frac{1}{\epsilon}\operatorname{polylog}\log\frac{1}{\epsilon}]
\end{gather*}
In addition, the complexity of the Quantum Fourier Transform is $\mathscr{O}(n\log n)$\cite{QFT}. This concludes the proof.
\end{proof}
\begin{remark}
When using a smooth initialization function $g(p)\in C^k(\mathbb{R})$ as \eqref{g(p)}, the complexity will reduce to $d\log{\frac{1}{\epsilon}}\cdot\max\{\mathscr{O}(\epsilon^{-\frac{1}{k}}T),\mathscr{O}(\epsilon^{-\frac{1}{r}}T)\}$.
\end{remark}
\begin{remark}\label{rem:SMF_classical_complexity}
     For Eqs. \eqref{v-s_SMF}, classical implementation requires $\mathscr{O}((\frac{d^2+3d}{2})M^d)$ memories to store the discretized field $\mathbf{U}_h$. Considering the same spectral method,  the complexity of the Fast Fourier Transform is $\mathscr{O}(M^d\log{M^d})$ referring to Eq. \eqref{SMF_spectral_D}, so the application of the operator $A$ to the vector $\hat{\mathbf{U}}_h$ requires $\mathscr{O}((\frac{d^2+3d}{2})sM^d\log{M^d})$ arithmetic operations, where $s = 3$. If one uses the Crank-Nicolson scheme in time, classical simulation up to time $T$ within the additive error $\epsilon$ requires $\mathscr{O}(T^2/\epsilon)$ steps, which results in
     \begin{align*}
          \mathscr{O}((\frac{d^2+3d}{2})sM^d\log{M^d}T^2/\epsilon)
          = \mathscr{O}(\frac{(d^2+3d)d}{2r}\epsilon^{-(\frac{d}{r}+1)}\log{\frac{1}{\epsilon}}T^2)
     \end{align*}arithmetic operations. As shown in Theorem \ref{thm: SMFeqcomplexity}, when $T\sim\mathscr{O}(1)$, Schr$\ddot{\text{o}}$dingerisation method has exponential advantage in $d$.
\end{remark}

\subsection{Quantum simulation of velocity-stress equation with variable coefficients by staggered grid method}\label{sec_vaco_staggered}

In seismology and materials science, the properties of the medium through which elastic waves propagate typically vary with position. In this section, we will focus on the propagation of the first-order elastic wave equations \eqref{velocity-stress} in inhomogeneous media, where the elastic parameters depend on spatial variables.

Among the classical algorithms, the staggered grid method is widely used to solve the velocity-stress equation\cite{1986P,Randall}. Staggered grid refers to a grid system in which velocity and stress are defined on two different grids. This method can well deal with the coupling relationship between velocity and stress in the first-order wave equations. We use a uniform  spatial grid number $M_x = M_y = M_z = M$. Different variables are approximated at different points as:
\begin{align*}
&\boldsymbol{v_1} = \sum_{i,j,k} v_1(x_{i+\frac{1}{2}},y_j,z_k) \ket{i}\ket{j}\ket{k}, \quad
\boldsymbol{v_2} = \sum_{i,j,k} v_2(x_i,y_{j+\frac{1}{2}},z_k) \ket{i}\ket{j}\ket{k}, \\
&\boldsymbol{v_3} = \sum_{i,j,k} v_3(x_i,y_j,z_{k+\frac{1}{2}}) \ket{i}\ket{j}\ket{k}; \\
&\boldsymbol{\sigma_{11}} = \sum_{i,j,k} \sigma_{11}(x_i,y_j,z_k)\ket{i}\ket{j}\ket{k}, \quad
\boldsymbol{\sigma_{22}} = \sum_{i,j,k} \sigma_{22}(x_i,y_j,z_k)\ket{i}\ket{j}\ket{k}, \\
&\boldsymbol{\sigma_{33}} = \sum_{i,j,k} \sigma_{33}(x_i,y_j,z_k)\ket{i}\ket{j}\ket{k}; \\
&\boldsymbol{\sigma_{12}} = \sum_{i,j,k} \sigma_{12}(x_{i+\frac{1}{2}},y_{j+\frac{1}{2}},z_k)\ket{i}\ket{j}\ket{k}, \quad
\boldsymbol{\sigma_{13}} = \sum_{i,j,k} \sigma_{13}(x_{i+\frac{1}{2}},y_{j},z_{k+\frac{1}{2}})\ket{i}\ket{j}\ket{k}, \\
&\boldsymbol{\sigma_{23}} = \sum_{i,j,k} \sigma_{23}(x_{i},y_{j+\frac{1}{2}},z_{k+\frac{1}{2}})\ket{i}\ket{j}\ket{k}, \quad j_1,j_2,j_3 \in [ M ].
\end{align*}
Denote the collections of velocity and stress as:
$$
\boldsymbol{v} = (\boldsymbol{v_1},\boldsymbol{v_2},\boldsymbol{v_2}),\quad \boldsymbol{\sigma} = (\boldsymbol{\sigma_{11}},\boldsymbol{\sigma_{22}},\boldsymbol{\sigma_{33}},\boldsymbol{\sigma_{12}},\boldsymbol{\sigma_{13}},\boldsymbol{\sigma_{23}} ),\quad \boldsymbol{u} = (\boldsymbol{v},\boldsymbol{\sigma}).
$$
Define the matrices in the following way:
\begin{equation}\label{Smatrix}
S = \frac{1}{h}\sum_{i=0}^{M-2} \ket{i}\bra{i+1}+\ket{M-1}\bra{0}-\boldsymbol{1}_M,\quad
S_i = I^{\otimes^{i-1}} \otimes S \otimes I^{\otimes^{3-i}}.
\end{equation}
Using the staggered grid method to discretize Eq. \eqref{velocity-stress}, we can obtain:
\begin{equation}\label{v_s_stagger_discrete}
\frac{d}{dt} \boldsymbol{u} = A_H\boldsymbol{u}, \quad A_H = \begin{pmatrix}
\boldsymbol{0} & R^{-1}\boldsymbol{L}_v\\
-C\boldsymbol{L}_v^T& \boldsymbol{0}
\end{pmatrix},\quad
\boldsymbol{u}(0) = \begin{pmatrix}
     \boldsymbol{v}(0)\\
     \boldsymbol{\sigma}(0)\
     \end{pmatrix},
\end{equation}
where $R \in \mathbb{R}^{3M^3\times 3M^3}$ and $C\in \mathbb{R}^{6M^3\times 6M^3}$ collect  the elasticity parameters at the corresponding cell boundary or cell center.
$\boldsymbol{L}_{v}\in \mathbb{R}^{3M^3\times 6M^3}$ are the staggered grid difference matrices:
\begin{equation}
\boldsymbol{L}_v  = \begin{bmatrix}
S_1  &  &  &  -S^T_2&  -S^T_3& \\
&S_2  &  &-S^T_1  &  &-S^T_3 \\
&  & S_3 &  & -S^T_1&-S^T_2
\end{bmatrix}.
\end{equation}
\subsubsection{The Schr$\ddot{\text{o}}$dingerisation Simulation}\label{sec:vsSchrSim}

One can decompose the evolution operator $A_H$ of Eq. \eqref{v_s_stagger_discrete} into the Hermian and anti-Hermitian parts:
\begin{equation}\label{vseq_H1H2}
\frac{d}{dt} \boldsymbol{u} =  H_1\boldsymbol{u}+iH_2\boldsymbol{u},\quad 
\boldsymbol{u}(0) = \begin{pmatrix}
     \boldsymbol{v}(0)\\
     \boldsymbol{\sigma}(0)\\
     \boldsymbol{1}
\end{pmatrix}.
\end{equation}
The matrices $H_1$ and  $H_2$ can be written as follows:
\begin{equation}\label{vs_H1H2}
H_1 = \frac{1}{2}\begin{pmatrix}
    \boldsymbol{0}&R^{-1}\boldsymbol{L}_v-\boldsymbol{L}_vC&\boldsymbol{0}\\
    -C\boldsymbol{L}_v^T+\boldsymbol{L}_v^TR^{-1}&\boldsymbol{0}&\boldsymbol{0}\\
    \boldsymbol{0}&\boldsymbol{0}&\boldsymbol{0}
\end{pmatrix},
H_2 = \frac{1}{2i}\begin{pmatrix}
    \boldsymbol{0}&R^{-1}\boldsymbol{L}_v+\boldsymbol{L}_vC&\boldsymbol{0}\\
    -C\boldsymbol{L}_v^T-\boldsymbol{L}_v^TR^{-1}&\boldsymbol{0}&\boldsymbol{0}\\
    \boldsymbol{0}&\boldsymbol{0}&\boldsymbol{0}
\end{pmatrix}.
\end{equation}

One can obtain the following Hamiltonian systems by Schr$\ddot{\text{o}}$dingerisation method:
\begin{subequations}\label{vs_staggered_Schr}
    \begin{align}
        \frac{d}{dt} \boldsymbol{c} &= -i[(H_1\otimes D_p)-(H_2\otimes I)] \boldsymbol{c}=-iH_s^d \boldsymbol{c},\label{vs_staggered_Schr_d}
    \end{align}
\end{subequations}
where $H_1$ and $H_2$ are given by \eqref{vs_H1H2}. The matrices $D_p$ is the same as Sec. \ref{sec:vs1SchrSim}. Different variables have the relationship:
\begin{subequations}
    \begin{align}
        \boldsymbol{c}&=(I\otimes\Phi_p^{-1})(\boldsymbol{u}\otimes\sum_{j=0}^{N-1} {e^{-|p_j|}}\ket{j})),
    \end{align}
\end{subequations}
\begin{theorem}\label{thm: vseqcomplexity}
Under the same conditions as in Theorem \ref{thm: SMFeqcomplexity}, over the evolution time $T$, the gate complexity of quantum simulation of \eqref{vs_staggered_Schr} with success probability at least $1-2\epsilon$ is 
    \begin{equation}\label{w_spectral_DFT_complexity}
        N_{Gate}=  \mathscr{O}((\lceil\log(d^2+3d)\rceil+\frac{d}{2}\log\frac{1}{\epsilon})\epsilon^{-\frac{3}{2}}T).
    \end{equation}
\end{theorem}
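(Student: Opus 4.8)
The plan is to reuse the template of the proof of Theorem~\ref{thm: SMFeqcomplexity}: apply the Hamiltonian–simulation bound of Lemma~\ref{lem:ham_simu complexity} to the Schr\"odingerised Hamiltonian $H_s^d=(H_1\otimes D_p)-(H_2\otimes I)$ appearing in \eqref{vs_staggered_Schr_d}. The gate count is then controlled by five quantities: the spatial grid number $M$, the extended $p$-grid number $N$, the sparsity $s$ of $H_s^d$, its max-norm $\|H_s^d\|_{\max}$, and the qubit count $m_{H_s^d}$. First I would pin down each of these in terms of the target error $\epsilon$, and only the spatial accuracy and the resulting norm estimate differ from the spectral case.

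For the $p$-direction nothing changes from Theorem~\ref{thm: SMFeqcomplexity}: the warped initial datum $e^{-|p|}\boldsymbol{u}(0)$ is only continuous and not differentiable at $p=0$, so the Fourier approximation in $p$ converges at first order and $N=\mathscr{O}(\epsilon^{-1})$. The spatial variable is now handled by the staggered grid instead of the spectral method. Here the key observation is that, although the difference matrix $S$ of \eqref{Smatrix} is a one-sided difference of size $\mathscr{O}(1/h)$, on the staggered grid the velocity and stress nodes sit half a cell apart, so each such difference is effectively centered about the intermediate node and therefore attains second-order accuracy; consequently $M=\mathscr{O}(\epsilon^{-1/2})$, i.e.\ the role played by $r$ in Theorem~\ref{thm: SMFeqcomplexity} is now fixed at $r=2$.

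The step I expect to be the crux is the estimate of $\|H_s^d\|_{\max}$, and it is exactly where the bound departs from the spectral result. In Theorem~\ref{thm: SMFeqcomplexity} the Hermitian part $H_1$ carried only the bounded source block $\operatorname{diag}(\hat{\mathbf{f}}_h)/c$, so $\|H_1\otimes D_p\|_{\max}$ scaled merely like $\|D_p\|_{\max}=\mathscr{O}(N)$. By contrast, the variable-coefficient coupling places the staggered difference blocks built from $R^{-1}\boldsymbol{L}_v$ and $\boldsymbol{L}_vC$ into \emph{both} $H_1$ and $H_2$ (see \eqref{vs_H1H2}), each of size $\mathscr{O}(1/h)=\mathscr{O}(M)$ in max-norm. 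Hence $\|H_1\otimes D_p\|_{\max}=\|H_1\|_{\max}\,\|D_p\|_{\max}=\mathscr{O}(MN)$ now dominates $\|H_2\otimes I\|_{\max}=\mathscr{O}(M)$, giving $\|H_s^d\|_{\max}=\mathscr{O}(MN)=\mathscr{O}(\epsilon^{-3/2})$. This extra factor $M=\mathscr{O}(\epsilon^{-1/2})$, absent in the spectral analysis, is precisely the source of the $\epsilon^{-3/2}$ in the claim. The sparsity $s$ is inherited from the sparse staggered difference stencil and is a constant independent of $M$ and $N$.

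It then remains to assemble the count. With $\tau=s\|H_s^d\|_{\max}T=\mathscr{O}(s\epsilon^{-3/2}T)$, Lemma~\ref{lem:ham_simu complexity} yields $N_{query}=\mathscr{O}(s\epsilon^{-3/2}T+\log(1/\epsilon)/\log\log(1/\epsilon))$. The qubit count is $m_{H_s^d}=\lceil\log[(d^2+3d)M^dN]\rceil=\lceil\log(d^2+3d)\rceil+\frac{d}{2}\log\frac{1}{\epsilon}+\log\frac{1}{\epsilon}$, where $\frac{d^2+3d}{2}$ is the number of velocity and stress components and the $\frac{d}{2}\log\frac{1}{\epsilon}$ arises from $d\log M$ with $M=\mathscr{O}(\epsilon^{-1/2})$. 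Multiplying $m_{H_s^d}$ (together with the $m_e\operatorname{polylog}(m_e)$ precision term) by $N_{query}$ and keeping only the leading contributions gives the stated gate complexity; the Quantum Fourier Transform used in the recovery \eqref{recovery_formula} contributes only $\mathscr{O}(n\log n)$ and does not affect the leading order. This completes the plan.
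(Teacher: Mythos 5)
Your proposal is correct and follows essentially the same route as the paper's proof: fix $M=\mathscr{O}(\epsilon^{-1/2})$ (second-order staggered grid), $N=\mathscr{O}(\epsilon^{-1})$, constant sparsity (the paper computes $s=3s(S)=6$), the max-norm bound $\|H_s^d\|_{\max}=\max\{\mathscr{O}(MN),\mathscr{O}(M)\}=\mathscr{O}(\epsilon^{-3/2})$, and then assemble via Lemma~\ref{lem:ham_simu complexity} exactly as in Theorem~\ref{thm: SMFeqcomplexity}. Your explanation of the crux --- that the $\epsilon^{-3/2}$ arises because the difference blocks now sit in $H_1$ (unlike the SMF case where $H_1$ carried only the bounded source), so $\|H_1\otimes D_p\|_{\max}=\mathscr{O}(MN)$ --- is precisely what the paper's terse $\max\{\mathscr{O}(MN),\mathscr{O}(M)\}$ encodes, just spelled out.
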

\begin{proof}
Because of the second-order accuracy in the $x$-direction and the lack of regularity in the $p$-direction, the total number of discrete lattice points is 
\begin{equation}
    M=\mathscr{O}(\epsilon^{-\frac{1}{2}}), \quad N=\mathscr{O}({\epsilon}^{-1}).
\end{equation}
 The sparsity of Hamiltonians $H_s^d$ in \eqref{vs_staggered_Schr} is $s =3s(S)=6$. And
\begin{gather*}
    \|H_s^d\|_{\max}=\max\{\mathscr{O}(MN),\mathscr{O}(M)\} = \mathscr{O}(\epsilon^{-\frac{3}{2}}),
\end{gather*}
The remaining proof is similar to Theorem \ref{thm: SMFeqcomplexity}, so we omit it here.
\end{proof}
\begin{remark}
When using a smooth initialization function $g(p)\in C^k(\mathbb{R})$ as \eqref{g(p)}, the complexity will reduce to $\mathscr{O}(d\log{\frac{1}{\epsilon}}\epsilon^{-(\frac{1}{2}+\frac{1}{k})}T)$.
\end{remark}
\begin{remark}
     For Eqs. \eqref{eq5}, classical computation requires $\mathscr{O}((\frac{d^2+3d}{2})M^d)$ memories to store the discretized field $\boldsymbol{u}$. Using the same staggered grid spatial discretization scheme, the difference operator needs $\mathscr{O}(s(\frac{d^2+3d}{2})M^d)$ arithmetic operations, where $s = 3s(S)=6$. If one uses the Crank-Nicolson scheme in time, classical simulation up to time $T$ within the additive error $\epsilon$ requires $\mathscr{O}(T^2/\epsilon)$ steps, which results in
     \begin{align*}
          \mathscr{O}((\frac{d^2+3d}{2})sM^dT^2/\epsilon)
          = \mathscr{O}((\frac{d^2+3d}{2})\epsilon^{-(\frac{d}{2}+1)}T^2)
     \end{align*}arithmetic operations. As shown in  Theorem \ref{thm: vseqcomplexity}, the Schr$\ddot{\text{o}}$dingerisation method has an exponential advantage in $d$. 
\end{remark}

\section{Quantum simulation of elastic wave displacement equations in isotropic media}\label{sec:QS1}

In this section, we consider the quantum simulation of elastic wave equations \eqref{Tran1} via the Schr$\ddot{\text{o}}$dingerisation method. Using the spectral method and the central difference scheme for spatial discretizations, we adopt a uniform spatial grid size $\Delta x = \Delta y = \Delta z = \frac{b-a}{M}$, $M = 2^m$.

\subsection{Quantum simulation of elastic wave equations \eqref{Tran1} by spectral method}\label{sec_2_spectral}

 The spatial domain and grid points will be the same as in Sec \ref{sec:SMFofvseq}, 
\begin{equation}
    \boldsymbol{\mathbf{w}}_{i,\boldsymbol{j}} = \mathbf{w}_i(t,\boldsymbol{x_j}), \quad \mathbf{w}_h = \sum_{i=0}^{9}\sum_{j_1,j_2,j_3 = 0}^m \mathbf{w}_i(t,\boldsymbol{x_j})\ket{i}.
\end{equation}
The expressions of the external force  function at the lattice points are
\begin{equation}
    \boldsymbol{f}_{i,\boldsymbol{j}} = f_i(t,\boldsymbol{x_j}), \quad
\boldsymbol{f}_h = \sum_{i=0}^{2}\sum_{j_1,j_2,j_3 = 0}^mf_i(t,\boldsymbol{x_j})\ket{i}.
\end{equation}
Take the discrete Fourier transform of the spatial variables of \eqref{eq5}, one gets
\begin{equation}\label{secondoerder_EWE_spectral_P}
    \frac{d}{dt} \mathbf{w}_h +L_{\mu} \mathbf{w}_h+\boldsymbol{f}_h=0,
\end{equation}
where 
\begin{equation}
    L_{\mu} = -i\left(\begin{array}{cccc}
0 & K_{\mu} / \rho & M_{\mu} / \rho & \frac{1}{\rho} \boldsymbol{k}_{\mu} \\
2 \mu K_{\mu} & 0 & 0 & 0 \\
\mu M_{\mu} & 0 & 0 & 0 \\
\lambda \boldsymbol{k}_{\mu}^t & 0 & 0 & 0
\end{array}\right),
\end{equation}
with  corresponding momentum operators 
\begin{equation}
    \boldsymbol{k}_{\mu} = 
\left(\begin{array}{c}
\boldsymbol{P}_1\\
\boldsymbol{P}_2\\
\boldsymbol{P}_3
\end{array}\right),\quad
M_{\mu}=\left(\begin{array}{ccc}
0 & \boldsymbol{P}_3 & \boldsymbol{P}_2 \\
\boldsymbol{P}_3 & 0 & \boldsymbol{P}_1 \\
\boldsymbol{P}_2 & \boldsymbol{P}_1 & 0
\end{array}\right),\quad
K_{\mu} = \left(\begin{array}{ccc}
\boldsymbol{P}_1& 0 & 0 \\
0 & \boldsymbol{P}_2 & 0 \\
0&0 & \boldsymbol{P}_3
\end{array}\right),
\end{equation}
where $\boldsymbol{P}_i = I^{\otimes^{i-1}}\otimes P_{\mu} \otimes I^{\otimes^{3-i}}$. Let $
\hat{\boldsymbol{w}}_h = (I\otimes \boldsymbol{\Phi}^{-1}) \mathbf{w}_h,  
\hat{\boldsymbol{f}} = (I\otimes \boldsymbol{\Phi}^{-1}) \boldsymbol{f}_h, 
\boldsymbol{\Phi}=\Phi^{\otimes 3}$ , we can also reduce the sparsity and write the equaion \eqref{secondoerder_EWE_spectral_P} as: 
\begin{equation}\label{secondoerder_EWE_spectral_D}
    \frac{d}{dt}\hat{\boldsymbol{w}}_h+L_s \hat{\boldsymbol{w}}_h+\hat{\boldsymbol{f}}_h=0,
\end{equation}
where 
\begin{equation}
    L_s = -i\left(\begin{array}{cccc}
0 & K_s / \rho & M_s / \rho & \frac{1}{\rho} \boldsymbol{k}_s \\
2 \mu K_s & 0 & 0 & 0 \\
\mu M_s & 0 & 0 & 0 \\
\lambda (\boldsymbol{k}_s)^t & 0 & 0 & 0
\end{array}\right),
\end{equation}
while the momentum matrices become
\begin{equation}
    \boldsymbol{k}_s = 
\left(\begin{array}{c}
\boldsymbol{D}_1\\
\boldsymbol{D}_2\\
\boldsymbol{D}_3
\end{array}\right),\quad
M_s=\left(\begin{array}{ccc}
0 & \boldsymbol{D}_3 & \boldsymbol{D}_2 \\
\boldsymbol{D}_3 & 0 & \boldsymbol{D}_1 \\
\boldsymbol{D}_2 & \boldsymbol{D}_1 & 0
\end{array}\right),\quad
K_s = \left(\begin{array}{ccc}
\boldsymbol{D}_1& 0 & 0 \\
0 & \boldsymbol{D}_2 & 0 \\
0&0 & \boldsymbol{D}_3
\end{array}\right).
\end{equation}
Correspondingly, $\boldsymbol{D}_i$ satisfies 
$\boldsymbol{D}_i = I^{\otimes^{i-1}}\otimes D_{\mu} \otimes I^{\otimes^{3-i}}$.
\subsubsection{The Schr$\ddot{\text{o}}$dingerisation Simulation}

Applying the Schr\"odingerisation, the corresponding homogeneous system is : 
\begin{equation}\label{spectral_ode_finial}
\frac{d}{dt} \boldsymbol{u} =  H_1\boldsymbol{u}+iH_2\boldsymbol{u}.\quad 
\boldsymbol{u}(0) = \begin{pmatrix}
     \hat{\boldsymbol{w}}_h(0)\\
     \boldsymbol{r}(0)\\
     \boldsymbol{1}_{12M^3\times 1}
\end{pmatrix}
\end{equation}
The matrices $H_1$ and $H_2$ are defined by:
\begin{equation}\label{w_specrtal_H1H2}
H_1 = \begin{pmatrix}
    H_{s1}&\frac{1}{2}D(\tilde{\boldsymbol{f}})&\boldsymbol{0}\\
    \frac{1}{2}D(\tilde{\boldsymbol{f}})&\boldsymbol{0}&\boldsymbol{0}\\
    \boldsymbol{0}&\boldsymbol{0}&\boldsymbol{0}
\end{pmatrix},\quad 
H_2 = \begin{pmatrix}
    H_{s2}&\frac{1}{2i}D(\tilde{\boldsymbol{f}})&\boldsymbol{0}\\
    -\frac{1}{2i}D(\tilde{\boldsymbol{f}})&\boldsymbol{0}&\boldsymbol{0}\\
    \boldsymbol{0}&\boldsymbol{0}&\boldsymbol{0}
\end{pmatrix},
\end{equation}
where $H_{s1} = \frac{1}{2}[(L_s+L_s^\dagger] = \frac{i}{2}H_{-}$ and $H_{s2}=\frac{1}{2i}[(L_s-L_s^\dagger] = \frac{1}{2}H_{+}$, with
\begin{equation}\label{H_pm}
   H_{\pm} = \left(\begin{array}{cccc}
0 &(\frac{1}{ \rho} \pm2\mu)K_{s} & (\frac{1}{ \rho} \pm \mu)M_{s} & (\frac{1}{ \rho} \pm\lambda)\boldsymbol{k}_{s} \\
(2\mu \pm\frac{1}{ \rho}) K_{s} & 0 & 0 & 0 \\
(\mu \pm\frac{1}{ \rho}) M_{s} & 0 & 0 & 0 \\
(\lambda\pm\frac{1}{ \rho})\boldsymbol{k}_{s}^t & 0 & 0 & 0
\end{array}\right).
\end{equation}

The Hamiltonian system obtained through  Schr$\ddot{\text{o}}$dingerisation method is 
\begin{subequations}\label{w_spectral_Schr}
    \begin{align}
        \frac{d}{dt} \boldsymbol{c} &= -i[(H_1\otimes D_p)-(H_2\otimes I)] \boldsymbol{c}=-iH_s^d \boldsymbol{c},\label{w_spectral_Schr_d}
    \end{align}
\end{subequations}
where $H_1$ and $H_2$ are given in \eqref{w_specrtal_H1H2}.
  

\begin{theorem}\label{thm: spectralcomplexity}
Under the same conditions as in Theorem \ref{thm: SMFeqcomplexity}, in $d$-dimensional case, the solution is smooth enough such that $\boldsymbol{w} \in H^r(\Omega), r \geq 2$. The gate complexity of quantum simulation of \eqref{w_spectral_Schr} with success probability at least $1-2\epsilon$ is 
\begin{equation}\label{w_spectral_complexity}
        N^d_{Gate}=  \mathscr{O}[(\lceil\log{(\frac{d^2+3d}{2}+1)}\rceil+\frac{d}{r}\log \frac{1}{\epsilon})\epsilon^{-(\frac{1}{r}+1)}T].
    \end{equation}
\end{theorem}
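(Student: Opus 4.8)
The plan is to mirror the template used in the proofs of Theorem~\ref{thm: SMFeqcomplexity} and Theorem~\ref{thm: vseqcomplexity}, tracking the four quantities that feed into Lemma~\ref{lem:ham_simu complexity} --- the spatial grid size $M$, the extended-variable grid size $N$, the sparsity $s$, and the max-norm $\|H_s^d\|_{\max}$ of the Schr\"odingerised Hamiltonian --- and then assembling them into the query count and gate count. First I would fix the discretization parameters. Since the spatial approximation is again the Fourier spectral method and $\boldsymbol{w}\in H^r(\Omega)$, the $x$-direction carries $r$-order accuracy, giving $M=\mathscr{O}(\epsilon^{-1/r})$. In the $p$-direction the warped profile $e^{-|p|}$ is only continuous (not continuously differentiable), so $N=\mathscr{O}(\epsilon^{-1})$, exactly as in the earlier theorems.

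The decisive step, and the one that separates this estimate from Theorem~\ref{thm: SMFeqcomplexity}, is the evaluation of $\|H_s^d\|_{\max}$, where $H_s^d=H_1\otimes D_p-H_2\otimes I$ with $H_1,H_2$ taken from \eqref{w_specrtal_H1H2}. Unlike the SMF case, where the Hermitian part $H_1$ carried only the $\mathscr{O}(1)$ forcing block, here $H_1$ contains the genuine differential operator $H_{s1}=\tfrac{i}{2}H_-$ assembled from $K_s,M_s,\boldsymbol{k}_s$ in \eqref{H_pm}, whose entries are momentum symbols of magnitude $\mathscr{O}(M)$ (the largest Fourier mode in $x$). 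Hence $\|H_1\|_{\max}=\mathscr{O}(M)$, and since the diagonal $D_p$ contributes its largest mode $\mathscr{O}(N)$, the tensor product gives $\|H_1\otimes D_p\|_{\max}=\|H_1\|_{\max}\,\|D_p\|_{\max}=\mathscr{O}(MN)$, whereas $\|H_2\otimes I\|_{\max}=\|H_2\|_{\max}=\mathscr{O}(M)$ from $H_{s2}=\tfrac{1}{2}H_+$. Taking the maximum yields $\|H_s^d\|_{\max}=\mathscr{O}(MN)=\mathscr{O}(\epsilon^{-(1/r+1)})$, which is precisely the source of the extra $\epsilon^{-1/r}$ factor relative to Theorem~\ref{thm: SMFeqcomplexity}. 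I would also record that $H_s^d$ inherits the block-sparsity of $L_s$: each block row of $H_\pm$ has only a constant number of nonzero diagonal blocks coming from $K_s$, $M_s$ and $\boldsymbol{k}_s$, so $s=\mathscr{O}(1)$.

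With these four inputs in hand I would invoke Lemma~\ref{lem:ham_simu complexity}. Setting $\tau=s\|H_s^d\|_{\max}T=\mathscr{O}(\epsilon^{-(1/r+1)}T)$ and $\delta=\epsilon$ gives the query count $N_{query}=\mathscr{O}\!\big(\epsilon^{-(1/r+1)}T+\log(1/\epsilon)/\log\log(1/\epsilon)\big)$. The number of qubits is $m_{H_s^d}=\lceil\log[(\tfrac{d^2+3d}{2}+1)M^dN]\rceil=\lceil\log(\tfrac{d^2+3d}{2}+1)\rceil+(\tfrac{d}{r}+1)\log\tfrac{1}{\epsilon}$, where $\tfrac{d^2+3d}{2}+1$ is the number of components of $\mathbf{w}=(\boldsymbol{\xi},\boldsymbol{\zeta_1},\boldsymbol{\zeta_2},p)$. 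Multiplying the per-query primitive overhead $\mathscr{O}((m_{H_s^d}+m_e\operatorname{polylog}m_e)N_{query})$ with the precision $m_e=\log(1/\epsilon)$, and adding the $\mathscr{O}(n\log n)$ cost of the QFT used for the $p$-recovery, I expect the dominant term to collapse to the stated bound \eqref{w_spectral_complexity}.

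The main obstacle is getting the $\|H_s^d\|_{\max}$ bookkeeping right rather than any single hard inequality: one must notice that, in contrast to the SMF and staggered-grid settings, the differential operator now lives inside the Hermitian factor $H_1$ that is tensored with $D_p$, so the two spectral radii $\mathscr{O}(M)$ and $\mathscr{O}(N)$ \emph{multiply} instead of competing under a maximum. Everything downstream --- the query bound, the qubit count, and the final product --- is then routine and proceeds verbatim as in the proof of Theorem~\ref{thm: SMFeqcomplexity}.
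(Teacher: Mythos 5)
Your proposal is correct and follows essentially the same route as the paper: fix $M=\mathscr{O}(\epsilon^{-1/r})$, $N=\mathscr{O}(\epsilon^{-1})$, note the component count $\tfrac{d^2+3d}{2}+1$, and feed sparsity, max-norm and qubit count into Lemma~\ref{lem:ham_simu complexity}. In fact you supply the one step the paper leaves implicit (deferring to the earlier theorems), namely that here the differential operator sits inside the Hermitian part $H_1$, so $\|H_s^d\|_{\max}=\mathscr{O}(MN)=\mathscr{O}(\epsilon^{-(1/r+1)})$ rather than $\mathscr{O}(\max\{M,N\})$ --- which is exactly the mechanism behind the stated exponent.
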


\begin{proof}
Because of $\boldsymbol{w}\in H^r(\Omega)$ and $e^{-|p|}\in C^0(\mathbb{R}) \setminus C^1(\mathbb{R})$, as shown in Theorem. \ref{thm: SMFeqcomplexity}, one have
\begin{equation}
    M=\mathscr{O}(\epsilon^{-\frac{1}{r}}),\quad N=\mathscr{O}(\epsilon^{-1})
\end{equation}
And the dimension of the varivales in Eqs. \eqref{eq5} is $\frac{d^2+3d}{2}+1$. The remaining proof is similar to \ref{thm: SMFeqcomplexity}, so we omit it here.
\end{proof}

\begin{remark}
When using a smooth initialization function $g(p)\in C^k(\mathbb{R})$ as \eqref{g(p)}, the complexity will reduce to $\mathscr{O}(d\log{\frac{1}{\epsilon}}\epsilon^{-(\frac{1}{r}+\frac{1}{k})}T)$.
\end{remark}

\begin{remark}
     For Eqs. \eqref{eq5}, the arithmetic operations required for spatially spectral method and temporally Crank-Nicolson scheme is 
     \begin{align*}
          \mathscr{O}((\frac{d^2+3d}{2}+1)\epsilon^{-(\frac{d}{r}+1)}T^2 ).
     \end{align*} It can still be observed that Schr$\ddot{\text{o}}$dingerisation method has exponential advantage when $T\sim\mathscr{O}(1)$.
\end{remark}

\subsection{Schr$\ddot{\text{o}}$dingerisation with central difference scheme}

The same mesh discretisation and function symbols are defined  for the $3D$ case as in Sec.  \ref{sec_2_spectral}. One can apply the central difference scheme in $\boldsymbol{x}$-direction. Due to the periodic boundary conditions, denote $D$ the central difference matrix in 1-D:
\begin{equation}\label{central_D}
    D = \sum_{i=0}^{M-1} (\ket{i}\bra{i+1}+\ket{i+1}\bra{i})-\ket{0}\bra{M-1}+\ket{M-1}\bra{0}.
\end{equation}
Approximating Eq. \eqref{eq5} by the central difference scheme, one can obtain
\begin{equation}\label{w_central_ode}
    \frac{d}{dt} \mathbf{w}_h +L_c \mathbf{w}_h + \boldsymbol{f}_h = 0,
\end{equation}
where $L_c$ has the same form as $L_s$ in Eq. \eqref{secondoerder_EWE_spectral_D}, but the element $\boldsymbol{D}_i$ in $L_c$ satisfy:
\begin{equation}
    \boldsymbol{D}_i = I^{\otimes^{i-1}}\otimes D \otimes I^{\otimes^{3-i}}.
\end{equation}
\subsubsection{Schr$\ddot{\text{o}}$dingerisation Simulation}

Following  the same steps as Sec. \ref{sec_2_spectral}, we  obtain the homogeneous ODEs:
\begin{equation}\label{central_ode_H}
    \frac{d}{dt} \boldsymbol{u} =  H_1\boldsymbol{u}+iH_2\boldsymbol{u},\quad 
    \boldsymbol{u}(0) = \begin{pmatrix}
     \mathbf{w}_h(0)\\
     \boldsymbol{r}(0)\\
     \boldsymbol{1}_{12M^3\times1}
\end{pmatrix}.
\end{equation}
The matrices $H_1$ and $H_2$ have the following forms:
\begin{equation}\label{w_cenrtal_H1H2}
H_1 = \begin{pmatrix}
    H_{c1}&\frac{1}{2}D(\tilde{\boldsymbol{f}})&\boldsymbol{0}\\
    \frac{1}{2}D(\tilde{\boldsymbol{f}})&\boldsymbol{0}&\boldsymbol{0}\\
    \boldsymbol{0}&\boldsymbol{0}&\boldsymbol{0}
\end{pmatrix},\quad 
H_2 = \begin{pmatrix}
    H_{c2}&\frac{1}{2i}D(\tilde{\boldsymbol{f}})&\boldsymbol{0}\\
    -\frac{1}{2i}D(\tilde{\boldsymbol{f}})&\boldsymbol{0}&\boldsymbol{0}\\
    \boldsymbol{0}&\boldsymbol{0}&\boldsymbol{0}
\end{pmatrix},
\end{equation}
where $H_{c1} = \frac{1}{2}H_{-}$ and $H_{c2}=\frac{1}{2i}H_{+}$, requiring replacing $M_s$, $K_s$ and $\boldsymbol{k}_s$ with $M_c$, $K_c$ and $\boldsymbol{k}_c$. One can get the corresponding Hamiltonian system: 
\begin{subequations}\label{w_central_Schr}
    \begin{align}
        \frac{d}{dt} \boldsymbol{c} &= -i[(H_1\otimes D_p)-(H_2\otimes I)] \boldsymbol{c}=-iH^d_c \boldsymbol{c},\label{w_central_Schr_d}
    \end{align}
\end{subequations}
where $H_1$ and $H_2$ are given by \eqref{w_cenrtal_H1H2}. $D_p$ and $D_{\xi}$ are same as \eqref{w_spectral_Schr}.

\begin{theorem}\label{thm: centralcomplexity}
Under the same conditions as in Theorem \ref{thm: spectralcomplexity}, in $d$-dimensional case, the gate complexity of \eqref{w_central_Schr} is 
\begin{equation}\label{w_central_DFT_complexity}
    \begin{split}
        N^d_{Gate}=  \mathscr{O}((\lceil\log{(\frac{d^2+3d}{2}+1)}\rceil+\frac{d}{2}\log \frac{1}{\epsilon})\epsilon^{-\frac{3}{2}}T).
    \end{split}
    \end{equation}
\end{theorem}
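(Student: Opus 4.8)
The plan is to follow the same four-step template used in Theorems \ref{thm: SMFeqcomplexity}--\ref{thm: spectralcomplexity}, adjusting only the ingredients that change when the spectral discretization in $\boldsymbol{x}$ is replaced by the central difference scheme. First I would fix the two discretization parameters. Since the matrix $D$ in \eqref{central_D} is second-order accurate whenever $\boldsymbol{w}\in H^r(\Omega)$ with $r\geq 2$, the spatial error is $\mathscr{O}(M^{-2})$ \emph{independently} of $r$, so balancing against the target error forces $M=\mathscr{O}(\epsilon^{-1/2})$. In the extended variable the warped data $e^{-|p|}$ is only $C^0(\mathbb{R})\setminus C^1(\mathbb{R})$, exactly as in the earlier theorems, so the Fourier spectral method in $p$ still requires $N=\mathscr{O}(\epsilon^{-1})$.

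Next I would record the sparsity and max-norm of the simulation Hamiltonian $H^d_c=(H_1\otimes D_p)-(H_2\otimes I)$ from \eqref{w_central_Schr}. The operator $D$ has a fixed number of nonzeros per row (its sub- and super-diagonals together with the periodic wrap-around terms), and $L_c$ inherits the block pattern of $L_s$ in \eqref{secondoerder_EWE_spectral_D}; hence $H^d_c$ is $s$-sparse with $s=\mathscr{O}(1)$, the constant being absorbed into the $\mathscr{O}$-notation just as $s=3s(S)$ was in Theorem \ref{thm: vseqcomplexity}. The norm is the crux: the entries of $H_1,H_2$ built from $H_\pm$ in \eqref{H_pm} carry the difference operator $D$, whose magnitude scales as $\mathscr{O}(1/h)=\mathscr{O}(M)$, while the diagonal matrix $D_p$ has entries up to $\mathscr{O}(N)$. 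Therefore $\|H_1\otimes D_p\|_{\max}=\mathscr{O}(MN)$ and $\|H_2\|_{\max}=\mathscr{O}(M)$, so $\|H^d_c\|_{\max}=\mathscr{O}(MN)=\mathscr{O}(\epsilon^{-3/2})$.

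With these in hand I would invoke Lemma \ref{lem:ham_simu complexity}. Setting $\tau=s\|H^d_c\|_{\max}T=\mathscr{O}(\epsilon^{-3/2}T)$ and simulation error $\delta=\epsilon$ gives $N_{query}=\mathscr{O}(\epsilon^{-3/2}T+\log(1/\epsilon)/\log\log(1/\epsilon))$, whose leading term is $\epsilon^{-3/2}T$. Since the displacement system has $\tfrac{d^2+3d}{2}+1$ physical components, the qubit count is
\begin{equation*}
m_{H^d_c}=\Big\lceil\log\big[(\tfrac{d^2+3d}{2}+1)M^dN\big]\Big\rceil=\Big\lceil\log(\tfrac{d^2+3d}{2}+1)\Big\rceil+\Big(\tfrac{d}{2}+1\Big)\log\tfrac{1}{\epsilon},
\end{equation*}
using $\log M=\tfrac12\log\tfrac1\epsilon$ and $\log N=\log\tfrac1\epsilon$, with bit precision $m_e=\log(1/\epsilon)$. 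Substituting into $N_{gate}=\mathscr{O}\big((m_{H^d_c}+m_e\,\mathrm{polylog}(m_e))N_{query}\big)$ and keeping dominant terms yields \eqref{w_central_DFT_complexity}; the QFT overhead $\mathscr{O}(n\log n)$ in $p$ is lower order, as in Theorem \ref{thm: SMFeqcomplexity}.

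The step I expect to be the main obstacle is the max-norm estimate: recognizing that the central difference operator contributes the factor $\mathscr{O}(M)=\mathscr{O}(\epsilon^{-1/2})$ (through its $1/h$ scaling) rather than being $\mathscr{O}(1)$. This is precisely what separates the present result from Theorem \ref{thm: spectralcomplexity}: both carry the $\mathscr{O}(N)=\mathscr{O}(\epsilon^{-1})$ factor from $D_p$, but the spatial factor degrades from $M=\mathscr{O}(\epsilon^{-1/r})$ (spectral) to $M=\mathscr{O}(\epsilon^{-1/2})$ (central difference), turning $\epsilon^{-(1/r+1)}$ into $\epsilon^{-3/2}$ and the qubit term $\tfrac{d}{r}\log\tfrac1\epsilon$ into $\tfrac{d}{2}\log\tfrac1\epsilon$.
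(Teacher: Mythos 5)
Your proposal is correct and follows essentially the same route as the paper: fix $M=\mathscr{O}(\epsilon^{-1/2})$ from the second-order accuracy of the central difference scheme and $N=\mathscr{O}(\epsilon^{-1})$ from the non-smoothness of $e^{-|p|}$, note the $\mathscr{O}(1)$ sparsity of $L_c$, bound $\|H^d_c\|_{\max}=\mathscr{O}(MN)=\mathscr{O}(\epsilon^{-3/2})$ (the key point, since the asymmetry of the hyperbolic system leaves the $\mathscr{O}(M)$-sized difference operators inside $H_1$), and then apply Lemma \ref{lem:ham_simu complexity} with the qubit count $\lceil\log[(\tfrac{d^2+3d}{2}+1)M^dN]\rceil$. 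The paper compresses all of this into a reference to Theorems \ref{thm: vseqcomplexity} and \ref{thm: spectralcomplexity}; you have simply written out the deferred details, and they match.
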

\begin{proof}
Because of the second-order accuracy in the $x$-direction, the total number of discrete  points is $M=\mathscr{O}((\frac{1}{\epsilon})^{\frac{1}{2}})$. The sparsity of $L_c$ also becomes $s(L_c)=4\,s(D)+1=9$.
The proof is similar to that of Theorem \ref{thm: vseqcomplexity}.
\end{proof}
\begin{remark}
Using the Crank-Nicolson scheme, the total number of arithmetic operations required for classical simulation up to time $T$ is $\mathscr{O}((\frac{d^2+3d}{2}+1)\epsilon^{-(\frac{d}{2}+1)}T^2)$. Clearly for $d \geq 2$, the Schr$\ddot{\text{o}}$dingerisation method has exponential advantage in $d$.
\end{remark}

\begin{theorem}
Assuming that the medium parameters are $\mathscr{O}(1)$, the numbers of space points for Eqs. \eqref{w_central_Schr} and \eqref{w_spectral_Schr} are  $M^s$ and $M^c$, respectively. Then the corresponding $p^*$ in the Schr$\ddot{\text{o}}$dingerisation framework required to recover the solution are respectively:
\begin{equation}
    p_s^* \sim \mathscr{O}\left(\left(\frac{1}{\epsilon}\right)^{\frac{1}{r}}\right ),\quad\text{with spectral method},
\end{equation}
\begin{equation}
    p_c^* \sim \mathscr{O}\left(\left(\frac{1}{\epsilon}\right)^{\frac{1}{2}}\right),\quad\text{with central difference method}.
\end{equation}
Therefore, the former selects a larger domain of the auxiliary  variable $p$ than the latter.
\end{theorem}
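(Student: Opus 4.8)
The plan is to reduce the determination of $p^*$ to the recovery criterion $p^* \geq \max\{\lambda_n(H_1)T, 0\}$ stated in the recovery of solutions, and then to estimate the largest eigenvalue $\lambda_n(H_1)$ of the Hermitian part $H_1$ separately for the spectral discretization \eqref{w_spectral_Schr} and the central difference discretization \eqref{w_central_Schr}. Since the evolution time is treated as $T \sim \mathscr{O}(1)$ and the floor at $0$ does not affect the asymptotic order, the whole statement follows once $\lambda_n(H_1)$ is controlled for each scheme.

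First I would isolate the dominant contribution to $H_1$. In both \eqref{w_specrtal_H1H2} and \eqref{w_cenrtal_H1H2}, $H_1$ is built from a physical block ($H_{s1}=\frac{i}{2}H_-$ for the spectral scheme, $H_{c1}=\frac{1}{2}H_-$ for the central scheme) together with the source augmentation blocks $\frac{1}{2}D(\tilde{\boldsymbol{f}})$. Under the hypothesis that the medium parameters $1/\rho,\mu,\lambda$ and the forcing are $\mathscr{O}(1)$, the source blocks contribute only an $\mathscr{O}(1)$ shift, so $\lambda_n(H_1)$ is governed by $\|H_-\|$. By the block structure in \eqref{H_pm}, $\|H_-\|$ is in turn controlled by the norms of the discrete derivative operators assembled in $K_s,M_s,\boldsymbol{k}_s$ (resp. $K_c,M_c,\boldsymbol{k}_c$), each of which is an $\mathscr{O}(1)$ multiple of the one dimensional operator $\boldsymbol{D}_i = I^{\otimes(i-1)}\otimes D_\mu\otimes I^{\otimes(3-i)}$ in the spectral case and of its central difference analogue $I^{\otimes(i-1)}\otimes D\otimes I^{\otimes(3-i)}$ in the central case.

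Next I would estimate the two derivative norms. For the spectral scheme $D_\mu=\operatorname{diag}\{\mu_1,\dots,\mu_M\}$ with $|\mu_l|$ reaching $\sim \pi M^s/(b-a)$, whence $\|H_-\|\sim\mathscr{O}(M^s)$ and $\lambda_n(H_1)\sim\mathscr{O}(M^s)$. For the central difference scheme the eigenvalues of the periodic matrix $D$ in \eqref{central_D}, after the $1/\Delta x$ scaling, are bounded by $\sim 1/\Delta x = M^c/(b-a)$, giving $\lambda_n(H_1)\sim\mathscr{O}(M^c)$. Substituting the accuracy to resolution relations from Theorems \ref{thm: spectralcomplexity} and \ref{thm: centralcomplexity}, namely $M^s=\mathscr{O}(\epsilon^{-1/r})$ and $M^c=\mathscr{O}(\epsilon^{-1/2})$, and using $T\sim\mathscr{O}(1)$, yields $p_s^*\sim\mathscr{O}(\epsilon^{-1/r})$ and $p_c^*\sim\mathscr{O}(\epsilon^{-1/2})$; comparing the two exponents then gives the stated relation between the sizes of the $p$ domains.

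The step I expect to be the main obstacle is the eigenvalue estimate for $H_-$ itself, since it is a block coupling matrix rather than a single scaled derivative, and one must confirm that its largest eigenvalue genuinely inherits the $\mathscr{O}(M)$ scaling of the derivative operators without being diminished by cancellations among blocks or inflated by the coupling coefficients. I would resolve this by exploiting the bordered block structure of $H_-$ in \eqref{H_pm}, whose only nonzero blocks lie in the first block row and the first block column: such a Hermitian matrix has eigenvalues $\pm$ the singular values of the single coupling block formed by the first block row, with entries $(\tfrac{1}{\rho}\pm 2\mu)K_s$, $(\tfrac{1}{\rho}\pm\mu)M_s$, $(\tfrac{1}{\rho}\pm\lambda)\boldsymbol{k}_s$ (and their central difference analogues). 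Since every entry is an $\mathscr{O}(1)$ multiple of a derivative operator of norm order $M$, a two sided bound on this block pins $\lambda_n(H_1)$ to order $M$ uniformly in the $\mathscr{O}(1)$ medium parameters, completing the estimate.
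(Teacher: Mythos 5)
Your proposal is correct and takes essentially the same route as the paper: the paper also reduces the claim to the recovery criterion $p^* \geq \max\{\lambda_n(H_1)T,0\}$, writes the extremal eigenvalue of $H_1$ as an $\mathscr{O}(1)$ parameter factor (its matrix $P$ in \eqref{parameter_matrix}) times the extremal eigenvalue of the discrete derivative operator ($\pi M^s/2$ for the spectral scheme, $M^c/2$ for central differences), and then inserts $M^s=\mathscr{O}(\epsilon^{-1/r})$ and $M^c=\mathscr{O}(\epsilon^{-1/2})$. Your bordered-block/singular-value argument for why $\lambda_{\max}(H_1)$ genuinely scales like $\mathscr{O}(M)$, together with the explicit dismissal of the $\mathscr{O}(1)$ source blocks, supplies justification for steps the paper's two-line proof merely asserts.
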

\begin{proof}
Define the matrix
\begin{equation}\label{parameter_matrix}
P = \left(\begin{array}{cccc}
0 &(\frac{1}{ \rho} -2\mu) & (\frac{1}{ \rho} - \mu)  & (\frac{1}{ \rho} -\lambda)  \\
(2\mu -\frac{1}{ \rho})   & 0 & 0 & 0 \\
(\mu -\frac{1}{ \rho})   & 0 & 0 & 0 \\
(\lambda-\frac{1}{ \rho}) ^t & 0 & 0 & 0
\end{array}\right).
\end{equation}
It is easy to calculate the eigenvalues of the Hermite matrix $H_1$ in Schr$\ddot{\text{o}}$dingerisation,
\begin{align}
     \max\{|\lambda_n( H^s_1)|\} &= \max\{|\lambda_n( P)|\}\pi M^s/2,\\
     \max\{|\lambda_n( H^c_1)|\} &= \max\{|\lambda_n( P)|\}M^c/2.
\end{align}
\end{proof}

\begin{remark}
Eq. \eqref{eq5} can be decomposed into the form of the following equations depending on the spatial derivatives in different directions:
    \begin{equation}
        \frac{\partial \mathbf{w}}{\partial t}+A\mathbf{w}_x + B\mathbf{w}_y+C\mathbf{w}_z+\boldsymbol{F} =0.
    \end{equation}
    The non-commutativity of $A, B$, and $C$ poses challenges for traditional upwind schemes. However, the operator splitting method \cite{StrangSplitting} can be effectively applied to address this issue. While combining operator splitting with Schr$\ddot{\text{o}}$dingerisation increases the complexity of the implementation, it is important to note that the resulting ODEs with asymmetric coefficient matrices are well-suited for Schr$\ddot{\text{o}}$dingerisation, which also offers significant advantages in this context. Thus, we elaborate on the implementation of combining the central difference scheme with Schr$\ddot{\text{o}}$dingerisation. 
\end{remark}

\section{Numerical experiments}\label{sec:Numricalsimulation}

In this section, we perform the numerical experimentations to verify the correctness of the Schr\"odingerzation method.   All numerical tests are carried out on a classical computer. Since the quantum computers are not available, one cannot verify the complexity at the moment.
\subsection{One-dimensional velocity-stress equations with source terms}
In this numerical experiment, we consider the velocity and stress fields for waves propagating through a homogeneous medium under constant external forcing $f = 0.1$. The initial velocity is zero and the initial stress is given by the Gaussian function:
\begin{equation}
    \sigma(x,0) = \exp^{-(x-5)^2}.
\end{equation}
We use the spectral method for quantum simulations in Sec \ref{sec:SMFofvseq} and the temporal scheme is the Crank-Nicolson scheme with time step $\Delta t=\frac{1}{100}$. When the spatial grid number is set to $N_x=2^6$, the maximum eigenvalue of $H_1$ in Eq. \eqref{SMF_spectral_ode_finial} is $3.200$. In the discrete framework, the $p$ domain is chosen to be $[-4.2,5]$ with $N_p = 2^{10}$, so we pick $p_1 = 3.203$ to compute the one point recovery. Figure \ref{fig:vaco} shows the comparison between the solutions obtained by the Schr$\ddot{\text{o}}$dingerisation method and the classical spectral method. They are in good agreement.

\begin{figure}[htbp]
    \centering
    \subfigure[The velocity field]{
        \includegraphics[width=0.4\linewidth]{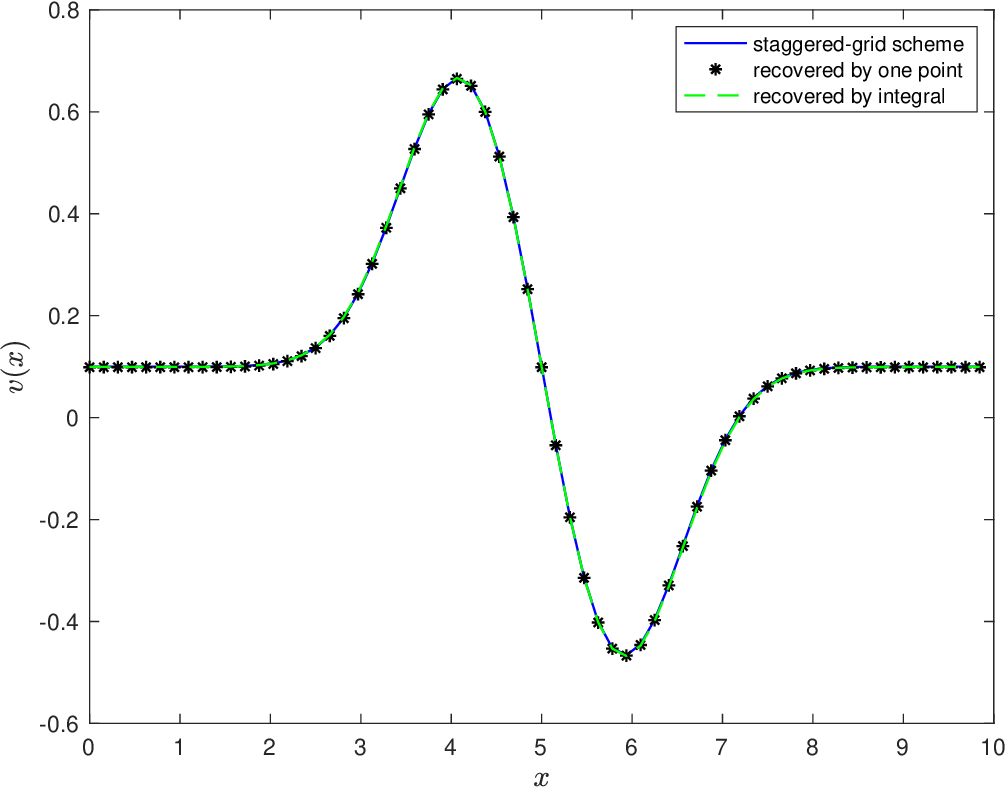}}
    \subfigure[The stress field]{
        \includegraphics[width=0.4\linewidth]{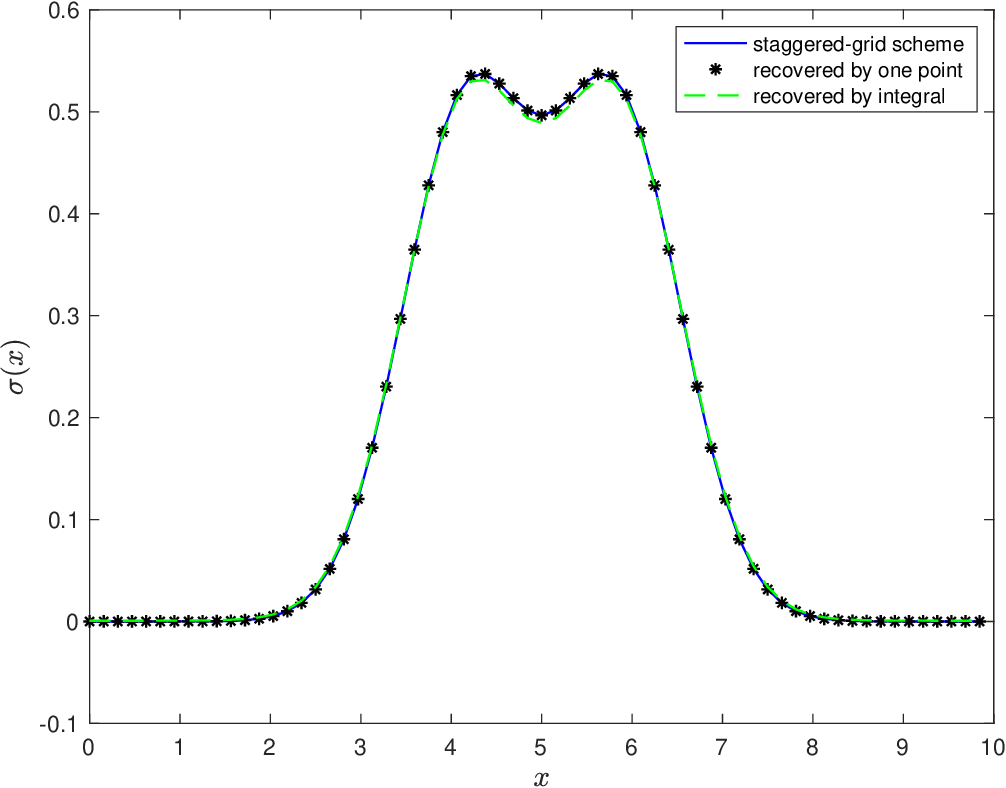}}
    \caption{Velocity and stress fields under the constant external force at $T=1$. }\label{fig:vaco}
\end{figure}

\subsection{Two-dimensional velocity-stress equations with variable coefficients}

We conduct simulations for the 2-dimensional variable coefficient velocity-stress equations. The initial stress $\sigma_{xx}$ is given by the Gaussian function and the medium density and Lam$\acute{e}$ parameters vary with position:
\begin{equation}
    \rho(x) = 1+ 0.5\sin{x}\cos{y},\quad\lambda(x) = 0.5+ 0.2\sin{x}\cos{y},\quad\mu(x) = 0.5+ 0.15\sin{x}\cos{y}.
\end{equation}
In the simulation, an implicit Euler scheme is employed for the temporal discretization, with step size $\Delta t = \frac{1}{200}$, and the spatial discretization numbers are set as $N_x = N_y = 2^5$ by using the staggered grid method. Within the discrete framework, the expanded new variable $ p $ is discretized with $ N_p = 2^{10} $ grid  points over the interval $[-3\pi, 3\pi]$. Figures \ref{fig:DFT_vs2} presents the comparison between the classical results and the quantum results at $x^* = 5$ and $y^* = 5$, which shows the Schr\"odingerisation with staggered grid method can match well.

\begin{figure}[htbp]
    \centering
    \subfigure[$v_1(x,y)$.]{
        \includegraphics[width=0.3\linewidth]{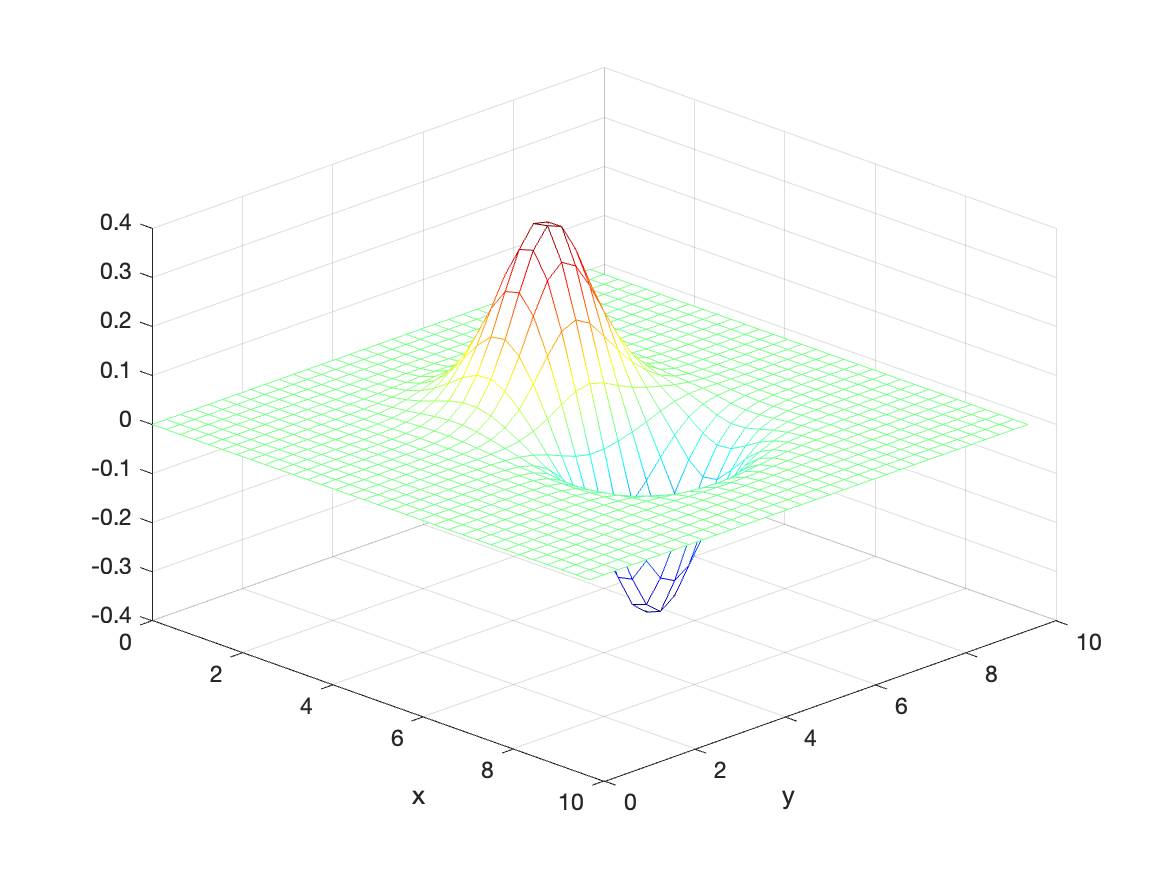}\label{fig:v1_2D}}
    \subfigure[$v_1(x^*,y)$.]{
        \includegraphics[width=0.27\linewidth]{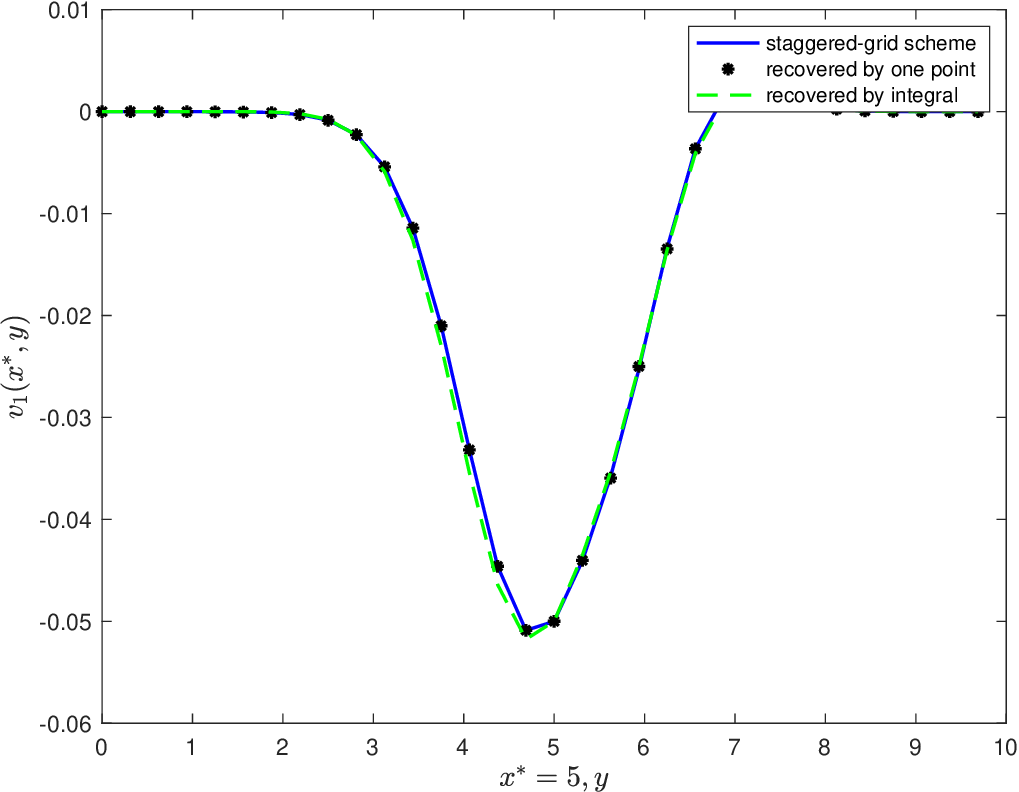}\label{fig:v1_fixX}}
    \subfigure[$v_1(x,y^*)$]{
        \includegraphics[width=0.27\linewidth]{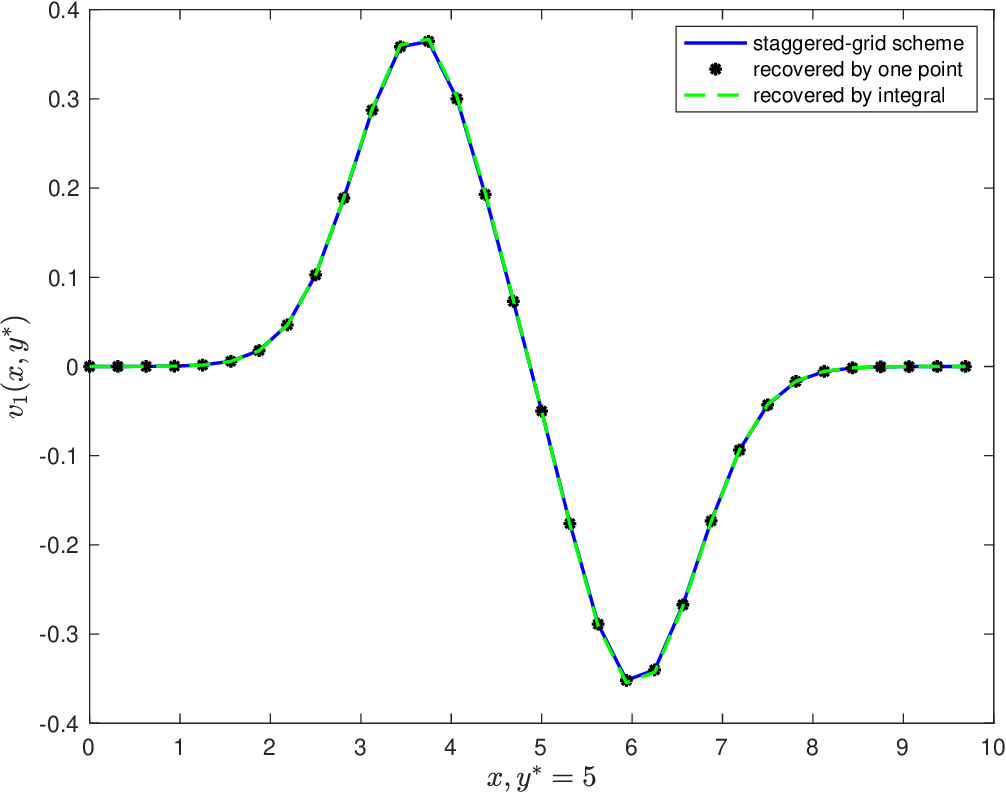}\label{fig:v1_fixY}}
    \subfigure[$v_2(x,y)$.]{
    \includegraphics[width=0.3\linewidth]{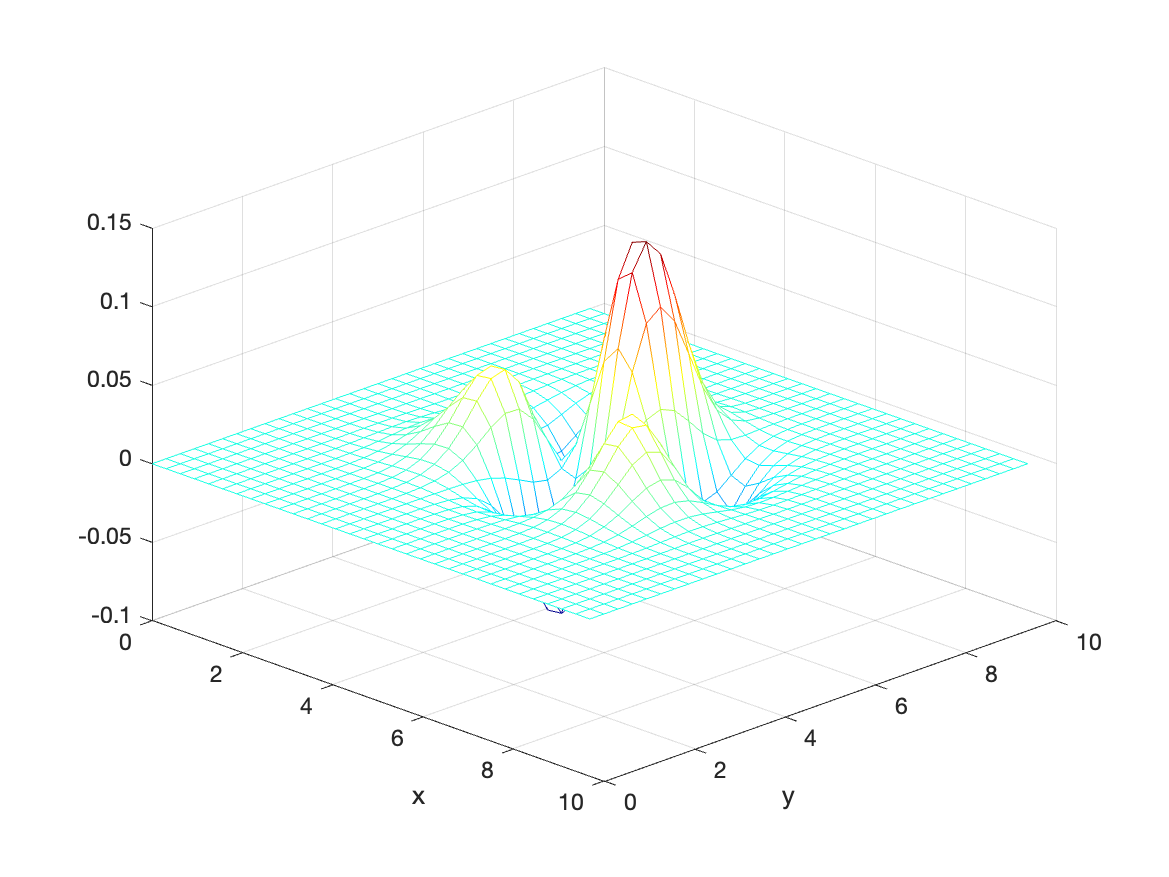}\label{fig:v2_2D}}
    \subfigure[$v_2(x^*,y)$.]{
    \includegraphics[width=0.27\linewidth]{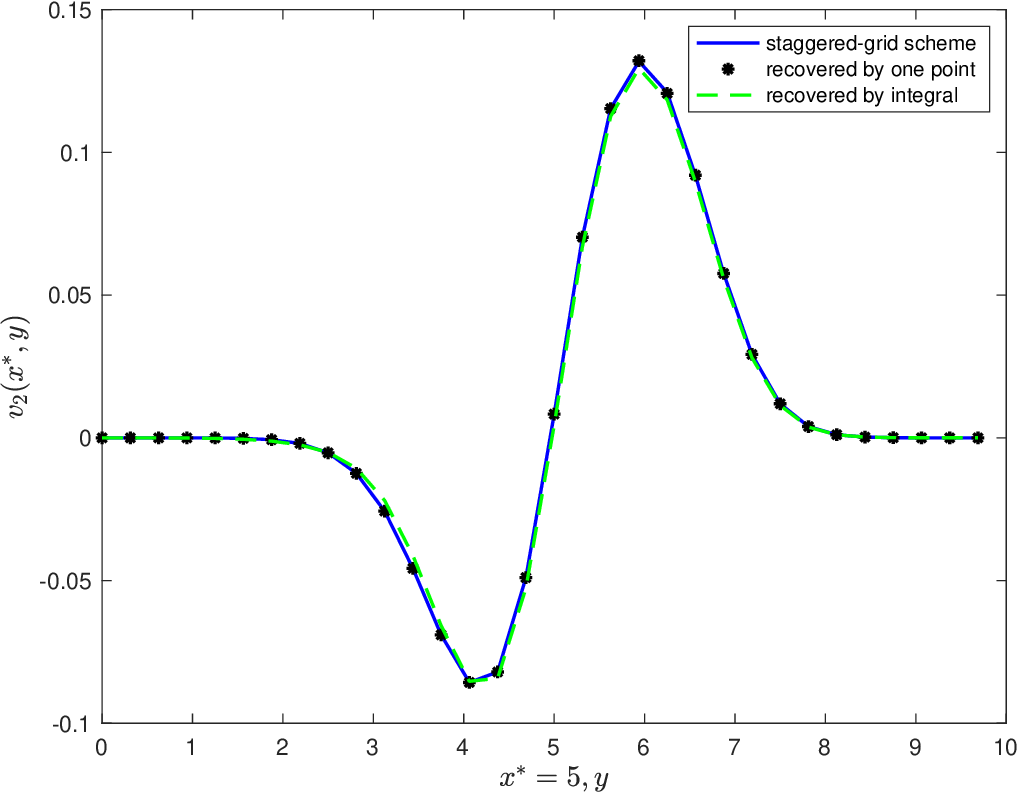}\label{fig:v2_fixX}}
    \subfigure[$v_2(x,y^*)$]{
    \includegraphics[width=0.27\linewidth]{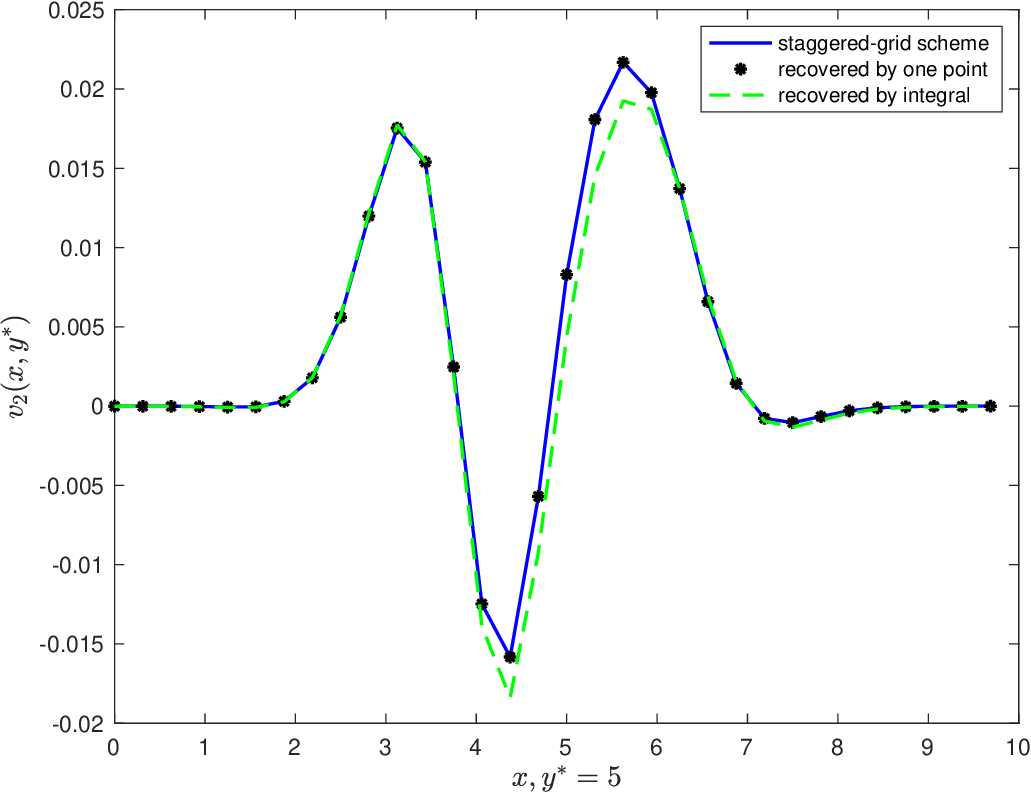}\label{fig:v2_fixY}}

    \subfigure[$\sigma_1(x,y)$.]{
    \includegraphics[width=0.3\linewidth]{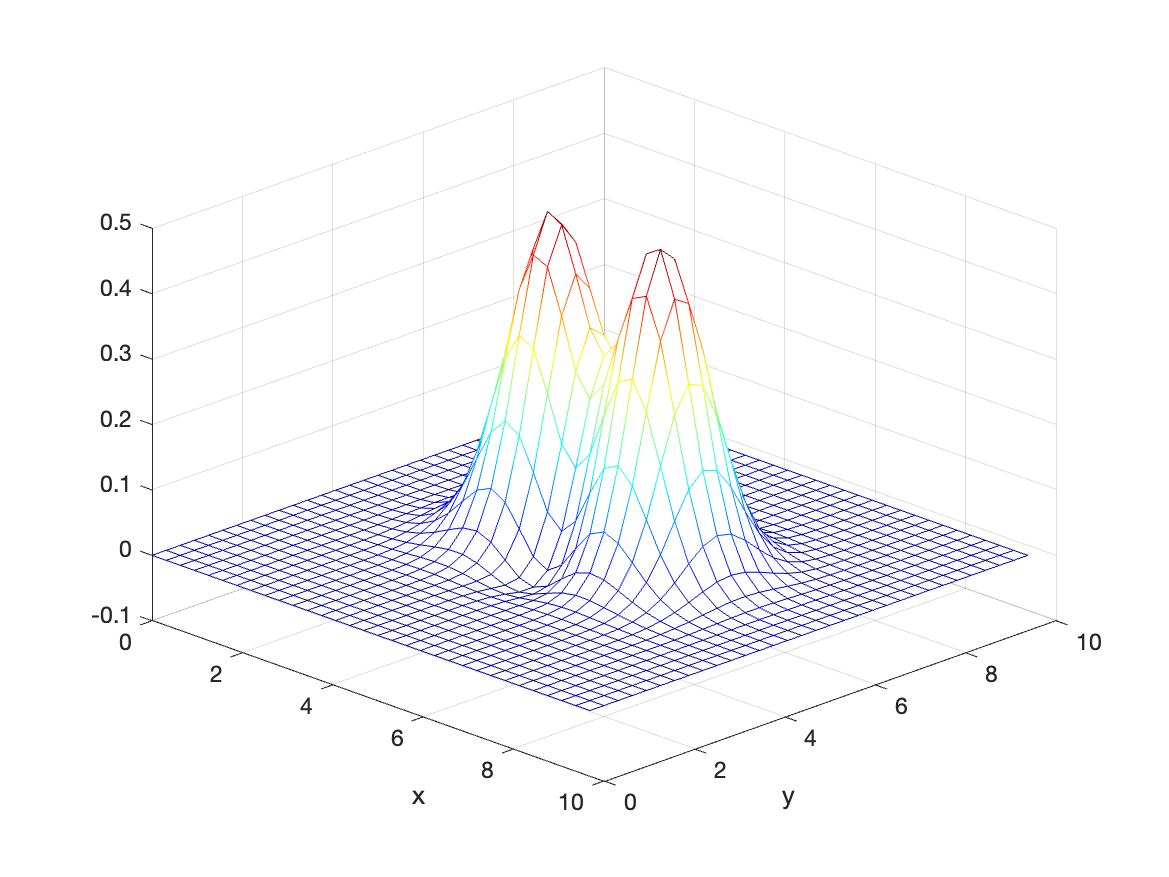}\label{fig:c1_2D}}
    \subfigure[$\sigma_1(x^*,y)$.]{
    \includegraphics[width=0.27\linewidth]{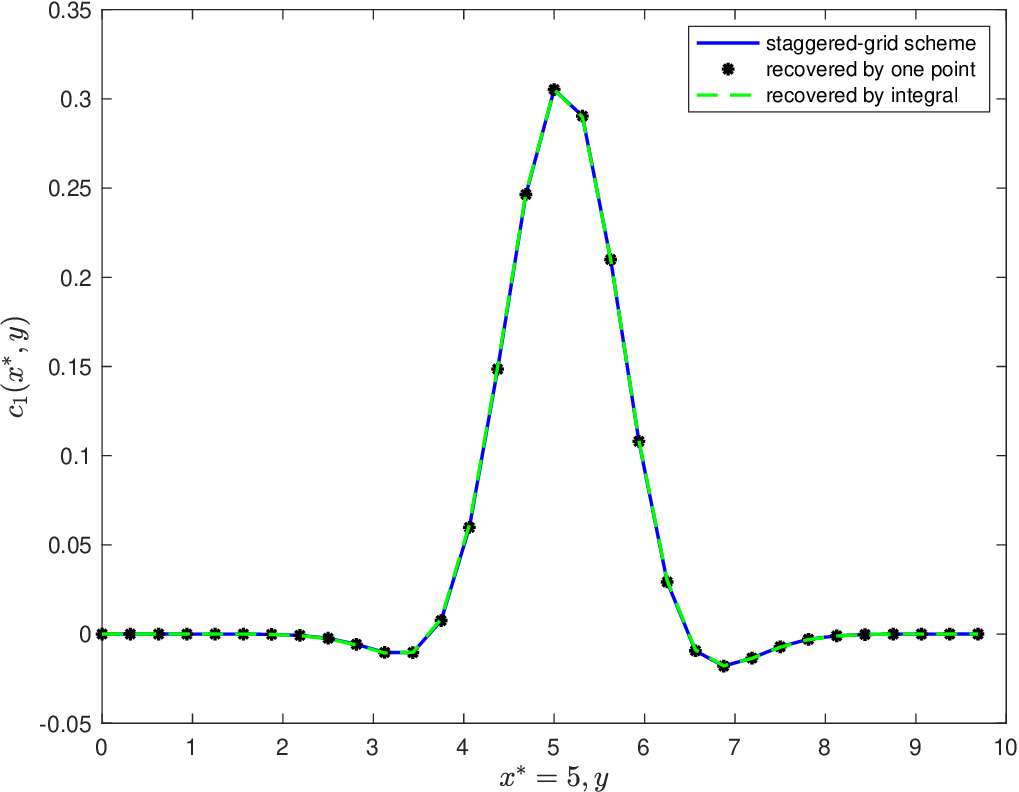}\label{fig:c1_fixX}}
    \subfigure[$\sigma_1(x,y^*)$]{
    \includegraphics[width=0.27\linewidth]{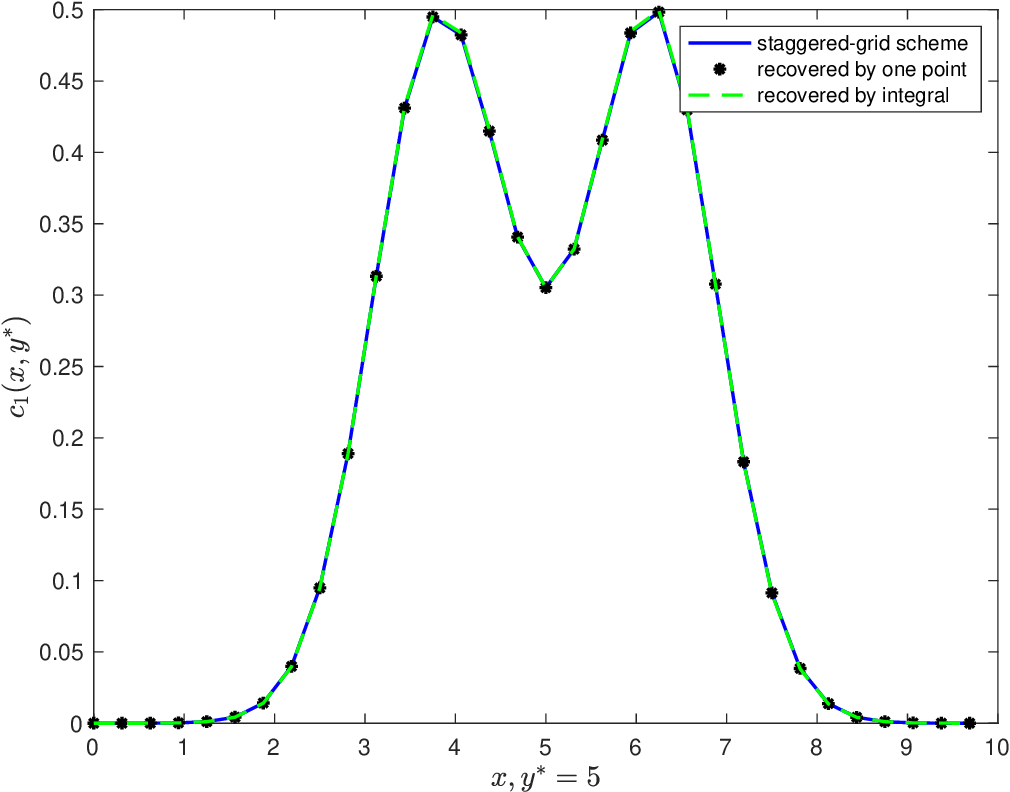}\label{fig:c1_fixY}}

    \subfigure[$\sigma_2(x,y)$.]{
    \includegraphics[width=0.3\linewidth]{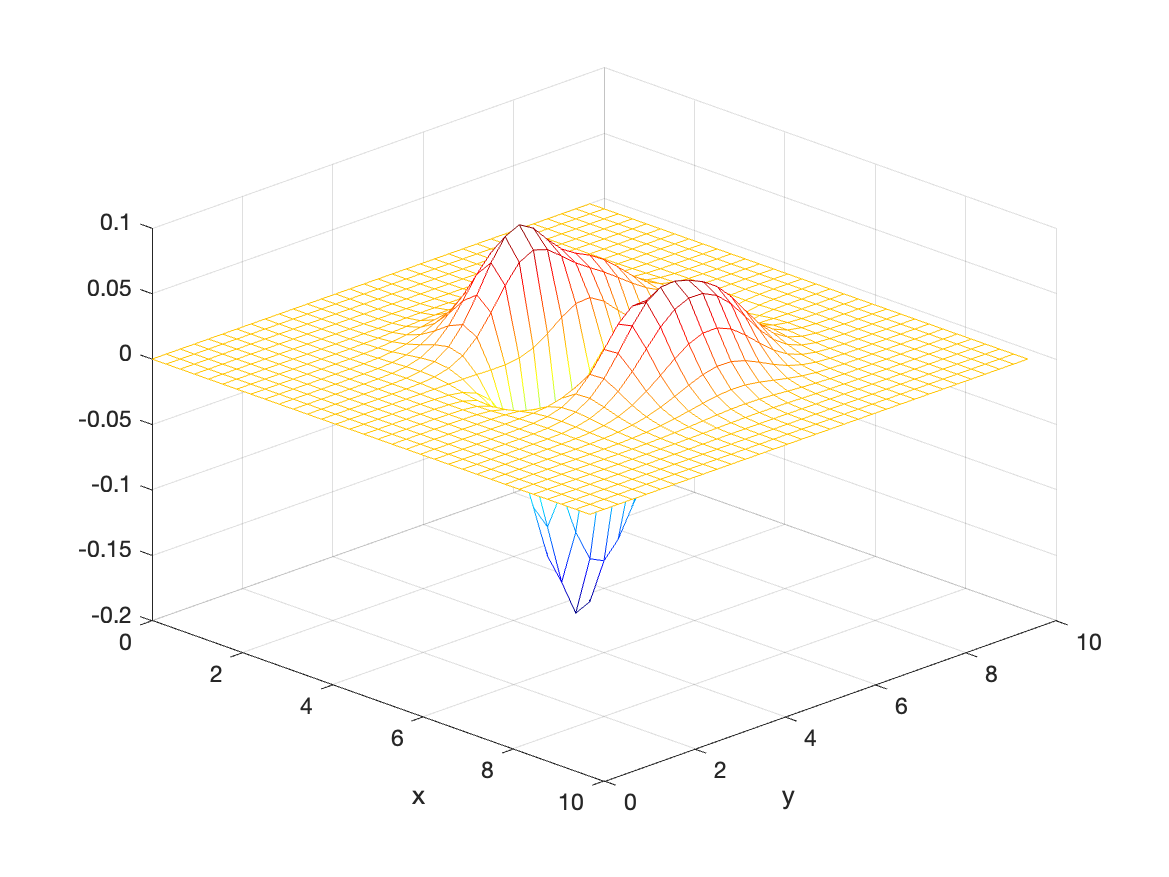}\label{fig:c2_2D}}
    \subfigure[$\sigma_2(x^*,y)$.]{
    \includegraphics[width=0.27\linewidth]{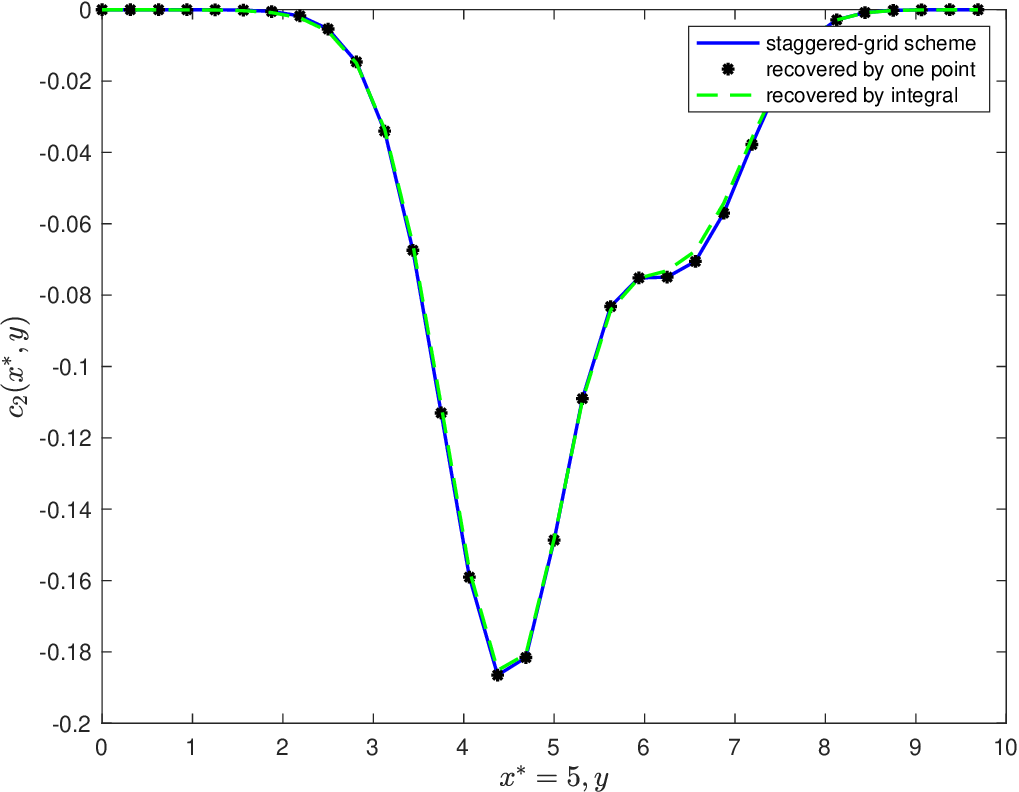}\label{fig:c2_fixX}}
    \subfigure[$\sigma_2(x,y^*)$]{
    \includegraphics[width=0.27\linewidth]{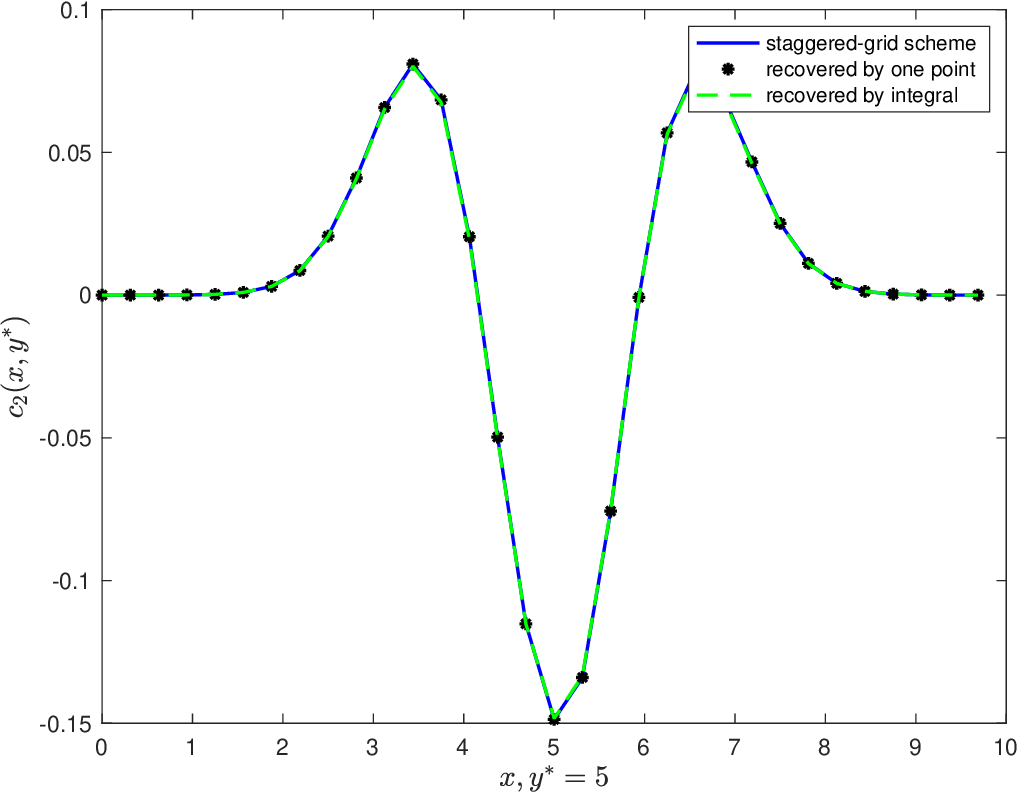}\label{fig:c2_fixY}}

    \subfigure[$\sigma_3(x,y)$.]{
    \includegraphics[width=0.3\linewidth]{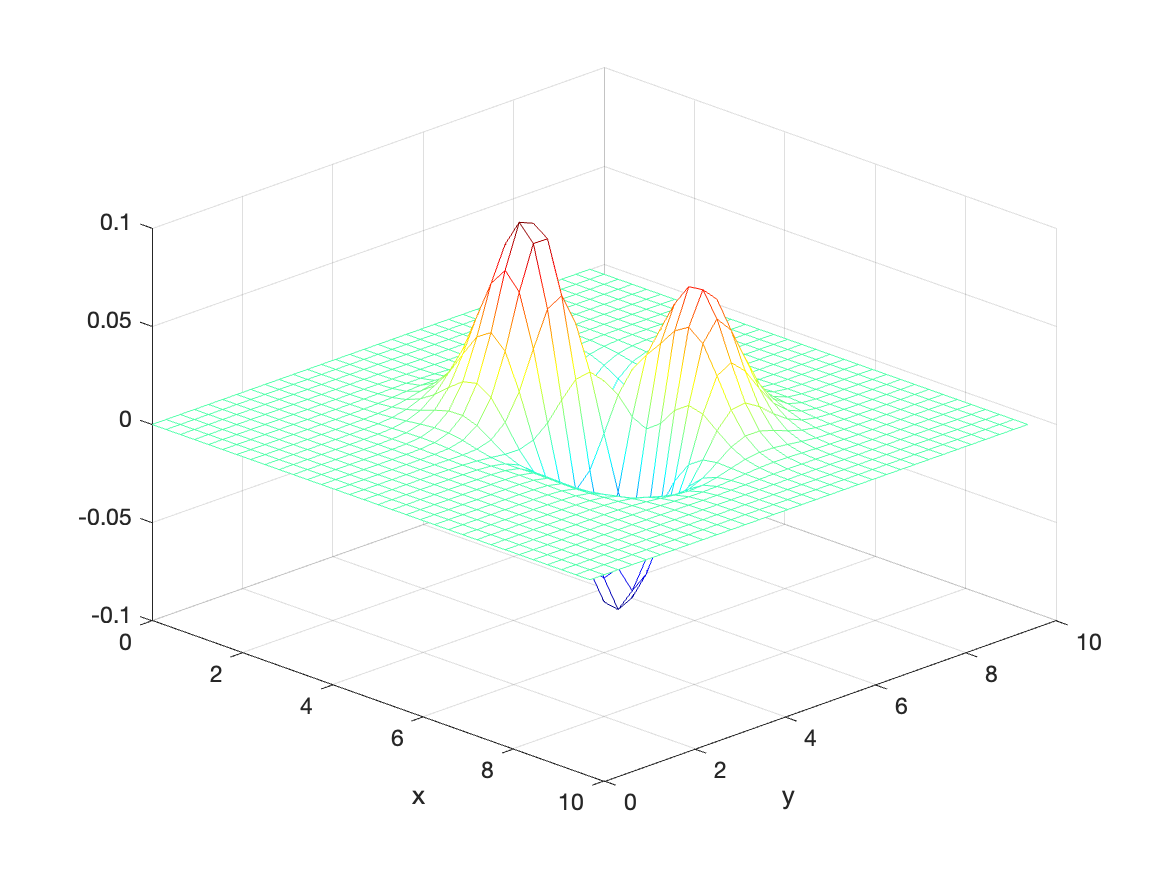}\label{fig:c3_2D}}
    \subfigure[$\sigma_3(x^*,y)$.]{
    \includegraphics[width=0.27\linewidth]{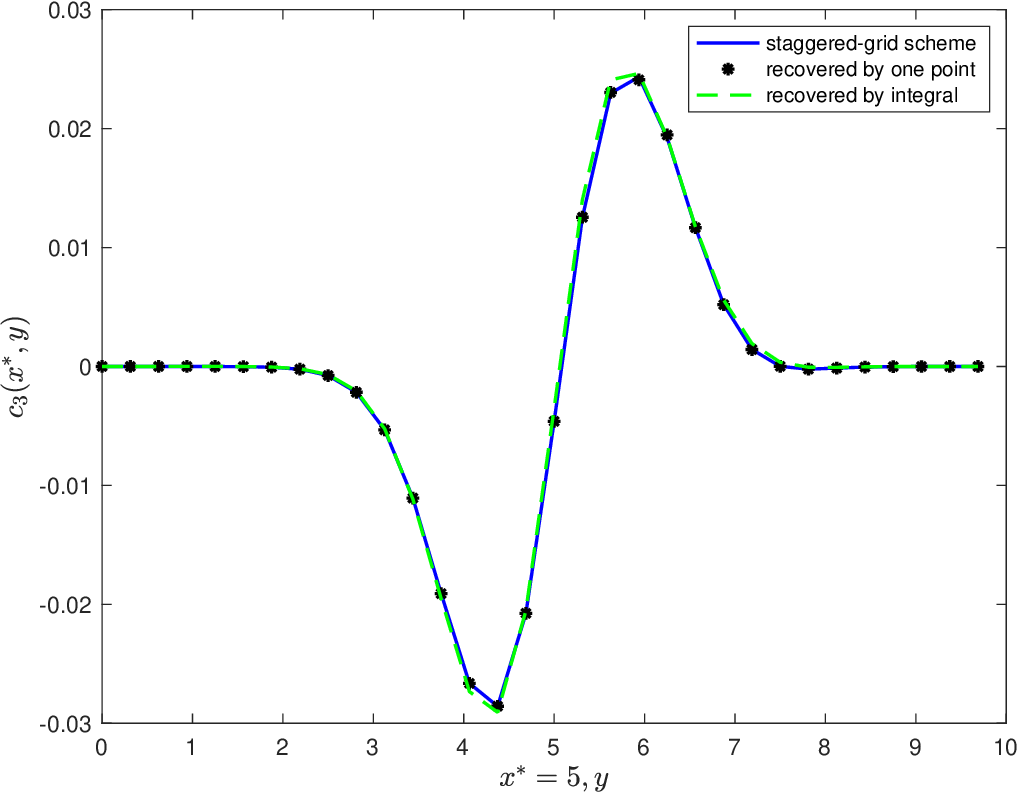}\label{fig:c3_fixX}}
    \subfigure[$\sigma_3(x,y^*)$]{
    \includegraphics[width=0.27\linewidth]{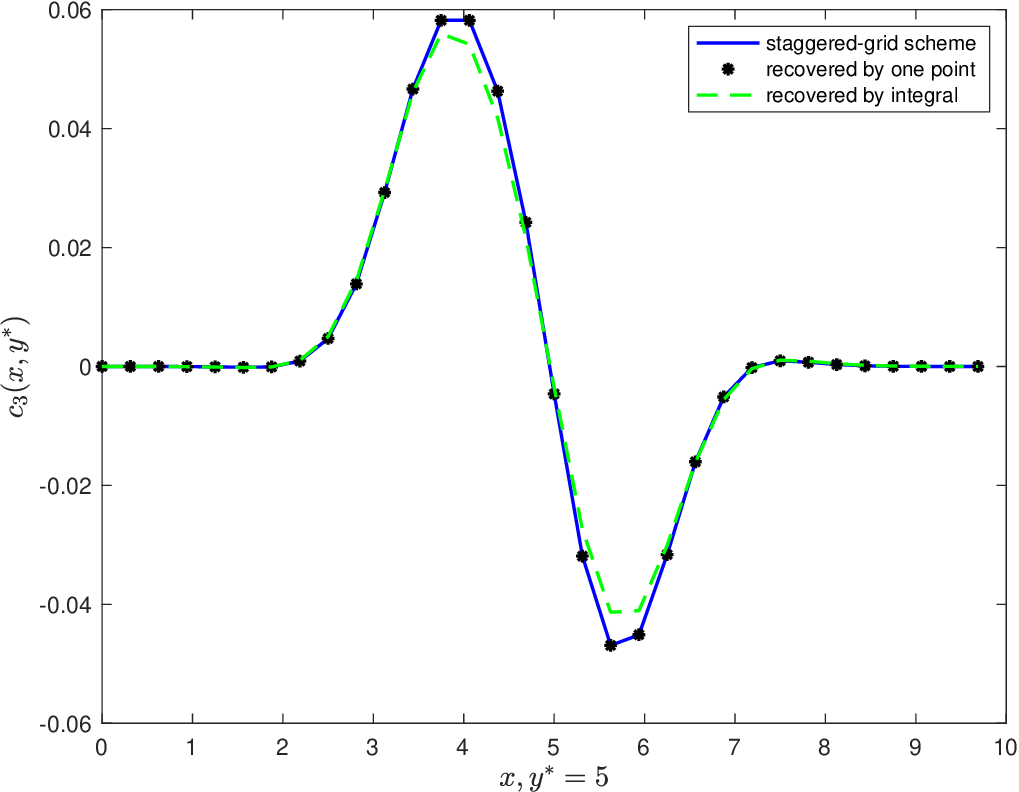}\label{fig:c3_fixY}}

    \caption{Result at $T=1$ with periodic boundary condition. On the left are the numerical solutions obtained by staggered grid method, in the middle and right are the cross-section solutions at \( x^*=5 \) and $y^*=5$. The top two rows present the velocity field, and the bottom three rows present the stress field.}\label{fig:DFT_vs2}

\end{figure}

\subsection{One-dimensional wave equation with periodic boundary conditions}

In $1$-dimensional case, the elastic wave propagation in a homogeneous medium is governed by the classical wave equation. Considering its hyperbolic system representation:

\begin{equation}\label{numtest_1}
    \begin{cases}
    \rho u_{tt} = \frac{\partial\sigma_{11}}{\partial x},\quad 0<x<1, t>0, \notag\\
    \sigma_{11} = \lambda u_x+2\mu u_x.
    \end{cases}
\end{equation}
The exact solution is $u = \cos{2 \pi t}\sin{2 \pi x}+(\sin{4 \pi t}\sin{4\pi x})/4 \pi$. We use Eq. \eqref{w_spectral_Schr} and Eq. \eqref{w_central_Schr} to perform numerical tests on the hyperbolic system with the Crank-Nicolson scheme and the step size is set as $\Delta t =\frac{1}{2000}$. Moreover, we adopt two parameter choices as in Tab. \ref{Tab:parameter} that affect the choice of  $p^*$  to recover the solution.
\begin{table}
\centering
\begin{tabular}{l|ccc}
Item & $\rho$ & $\mu$ & $\lambda$ \\\hline
1 & 1.41&0.35&0.71\\
2 & 1.41&0.40&0.61\end{tabular}
\caption{\label{Tab:parameter}Parameters of medium}
\end{table}
Eqs. \eqref{numtest_1} are equivalent to the following system:
\begin{equation}\label{eq46}
    \frac{d}{dt}\begin{pmatrix}
        \xi\\
        \epsilon\\
        p
    \end{pmatrix} = \begin{pmatrix}
         0&1 / \rho &1 / \rho\\
         2\mu &0&0\\
         \lambda &0 &0 
    \end{pmatrix}\begin{pmatrix}
        \xi\\
        \epsilon\\
        p
    \end{pmatrix}_x,
\end{equation}
where $\rho=\lambda+2\mu$. The exact solutions and the initial conditions can also be derived, in addition to the  period boundary conditions.
$$
\begin{aligned}
    \xi(x,t)&=-4\pi\sin{4\pi t}\sin{4\pi x}+\cos{8\pi t}\sin{8 \pi x}, \\
    \epsilon(x,t) &= 2\mu (4\pi\cos{4\pi t}\cos{4\pi x}+\sin{8\pi t}\cos{8\pi x}), \\
    p(x,t) &= \lambda\epsilon(x,t)/(2\mu).
\end{aligned}
$$

\subsubsection{Quantum simulation with spectral method}

In this part, we use Schr$\ddot{\text{o}}$dingerisation combined with spectral method to simulate the hyperbolic system. We set the spatial grid number $N_x = 2^5$ and the extended variable domain is $p\in[-3\pi,3\pi]$ and $N_p = 2^9 $. When choosing parameters in the  first row of Tab. \ref{Tab:parameter}, the Hermite matrix $H_1$ is non-positive definite matrix, so we pick $p_1=0.037\geq p^* = 0$ in Schr$\ddot{\text{o}}$dingerisation method. When choosing the parameters in the second row of Tab. \ref{Tab:parameter}, the Hermite matrix $H_1$ has positive eigenvalues, so we pick $p_1=6.774\geq p^* = 6.759$. The results are shown in Figures \ref{fig1}. Each subfigure shows the results of the classical scheme, the exact solution, and two recovery methods. They are in good agreements. 

\begin{figure}[htbp]
    \centering
    \subfigure[$\xi(x, T)$]{
        \includegraphics[width=0.3\linewidth]{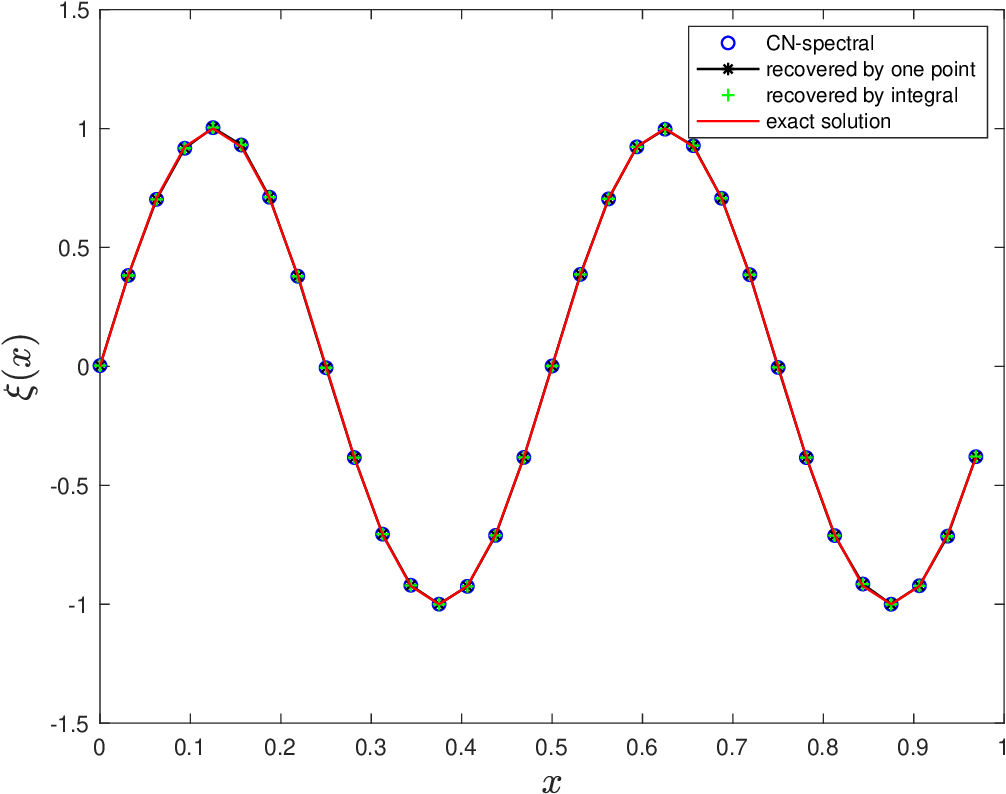}}
    \subfigure[$\epsilon(x, T)$]{
        \includegraphics[width=0.3\linewidth]{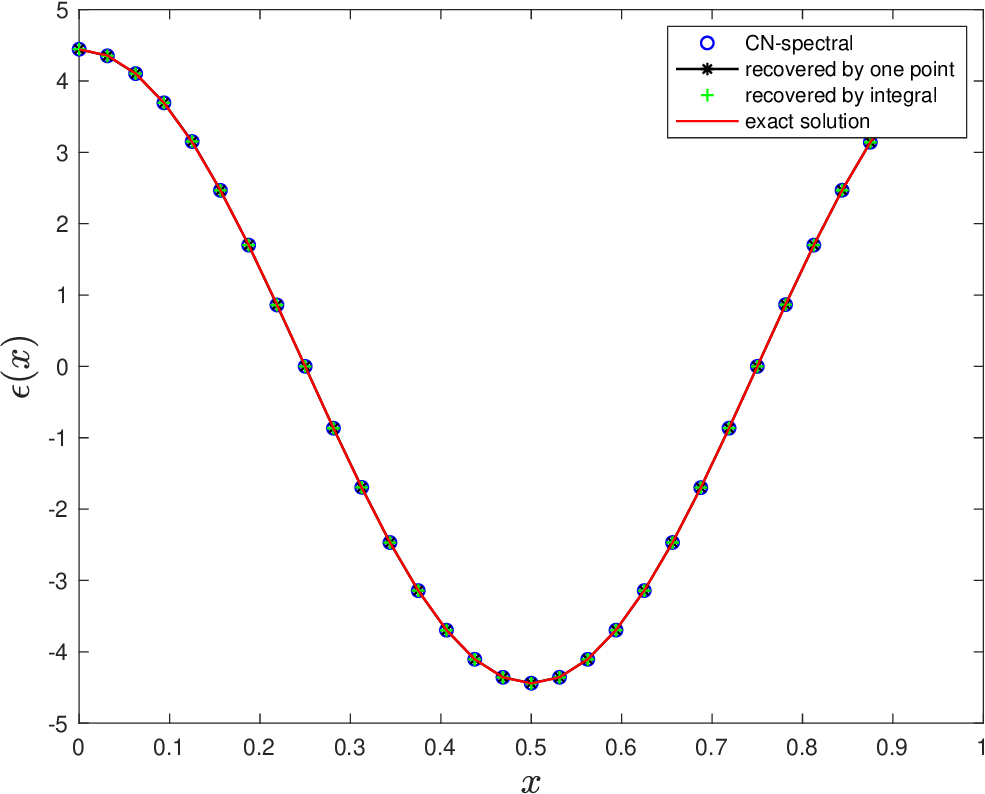}}
    \subfigure[$p(x, T)$]{
        \includegraphics[width=0.3\linewidth]{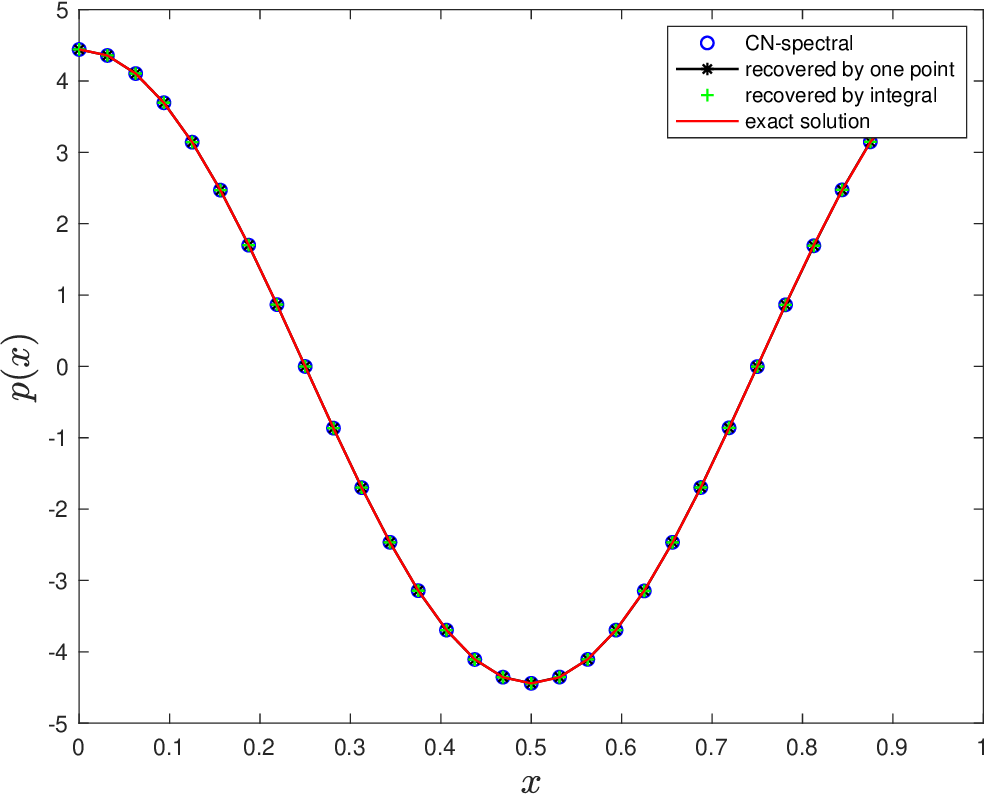}}
    \subfigure[$\xi(x, T)$]{
        \includegraphics[width=0.3\linewidth]{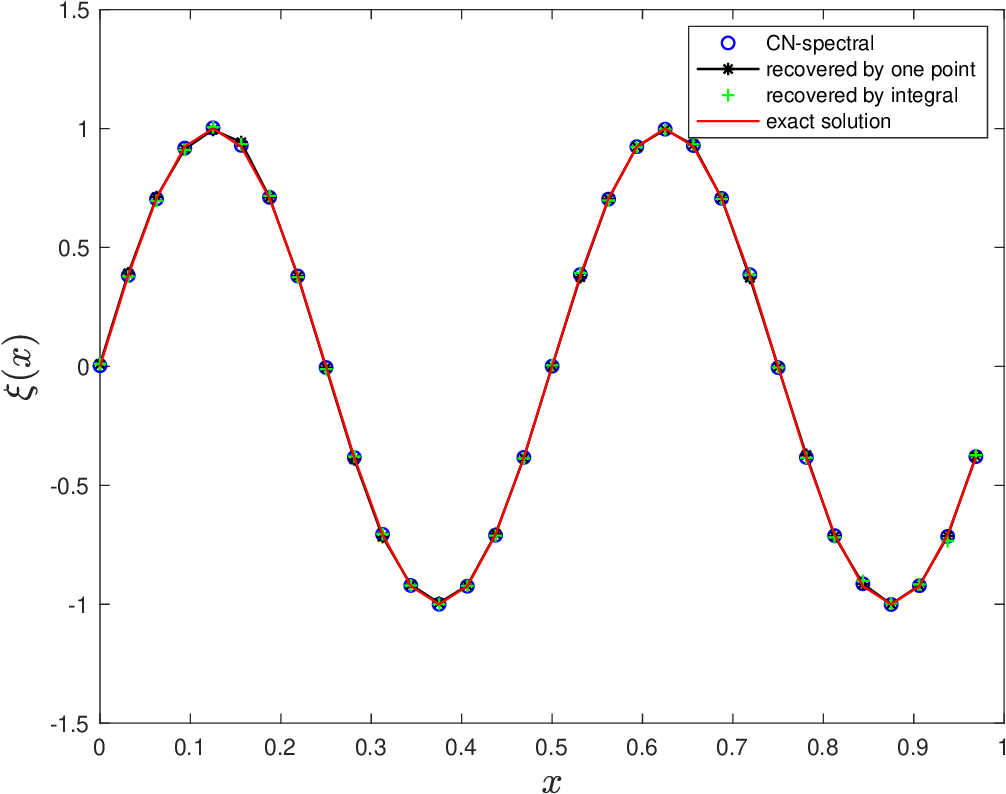}}
    \subfigure[$\epsilon(x, T)$]{
        \includegraphics[width=0.3\linewidth]{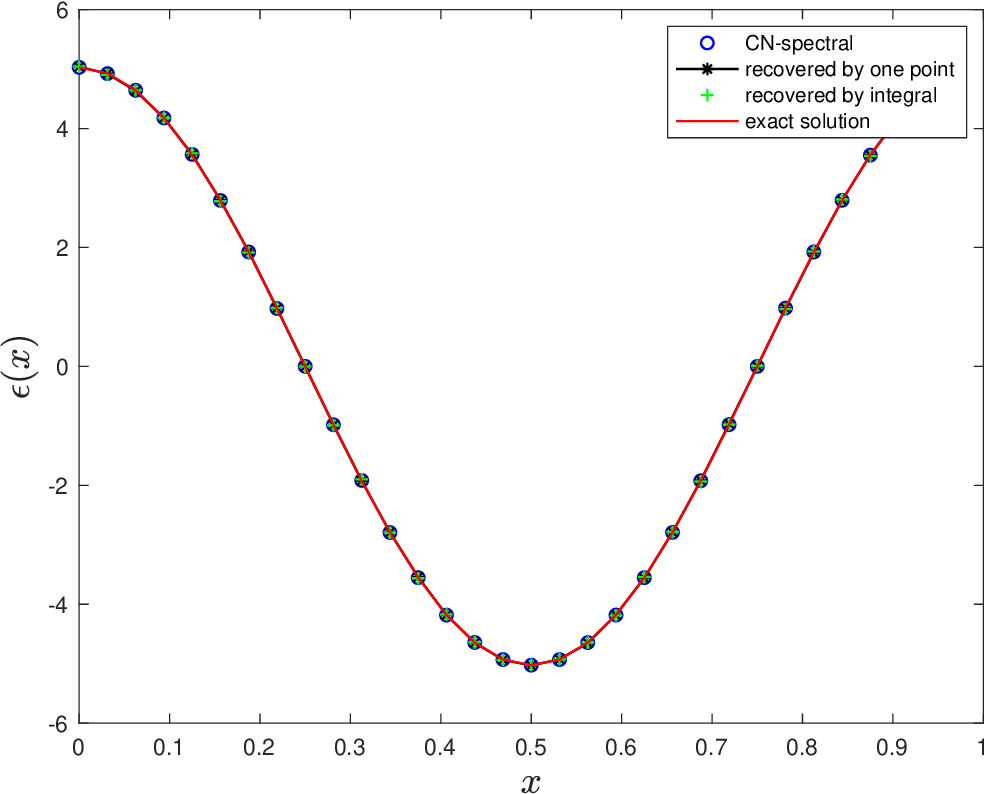}}
    \subfigure[$p(x, T)$]{
        \includegraphics[width=0.3\linewidth]{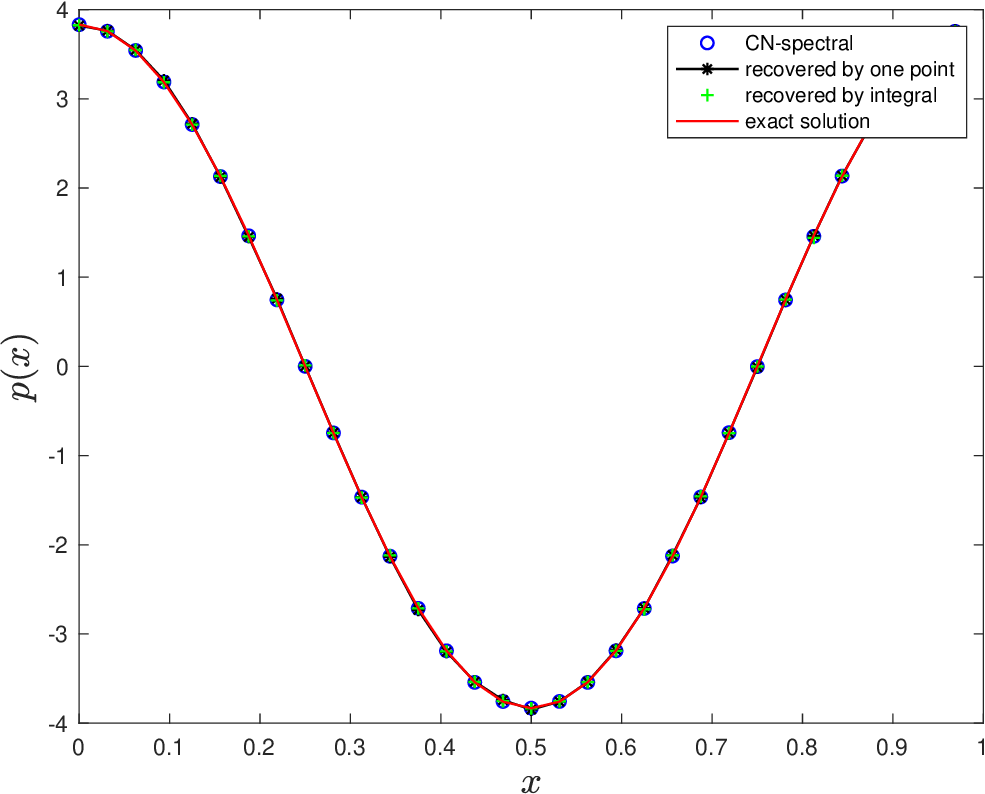}}
    \caption{Results of the hyperbolic system at $T=1$ using the Schr$\ddot{\text{o}}$dingerisation with the spectral method. The first row: using the first row parameters in Tab. \ref{Tab:parameter} ; The second row: using the second parameters in Tab. \ref{Tab:parameter}.}\label{fig1}
\end{figure}

\subsubsection{Quantum simulation with central difference method}

The central difference method has lower accuracy than the spectral method in space, so we set $N_x=2^6$ for numerical tests. In this part, we use the same setup as the above experiment. The spatial scheme will affect the eigenvalues of $H_1$, so for the second row parameters, the maximum eigenvalue of $H_1$ is $4.303$, so we pick $p_1=4.306$ in Schr$\ddot{\text{o}}$dingerisation method. Figures \ref{fig3} depict a comparison plot of the simulation results with the classical algorithm, quantum algorithm and the exact solution.
\begin{figure}[htbp]
    \centering
    \subfigure[$\xi(x, T)$]{
        \includegraphics[width=0.3\linewidth]{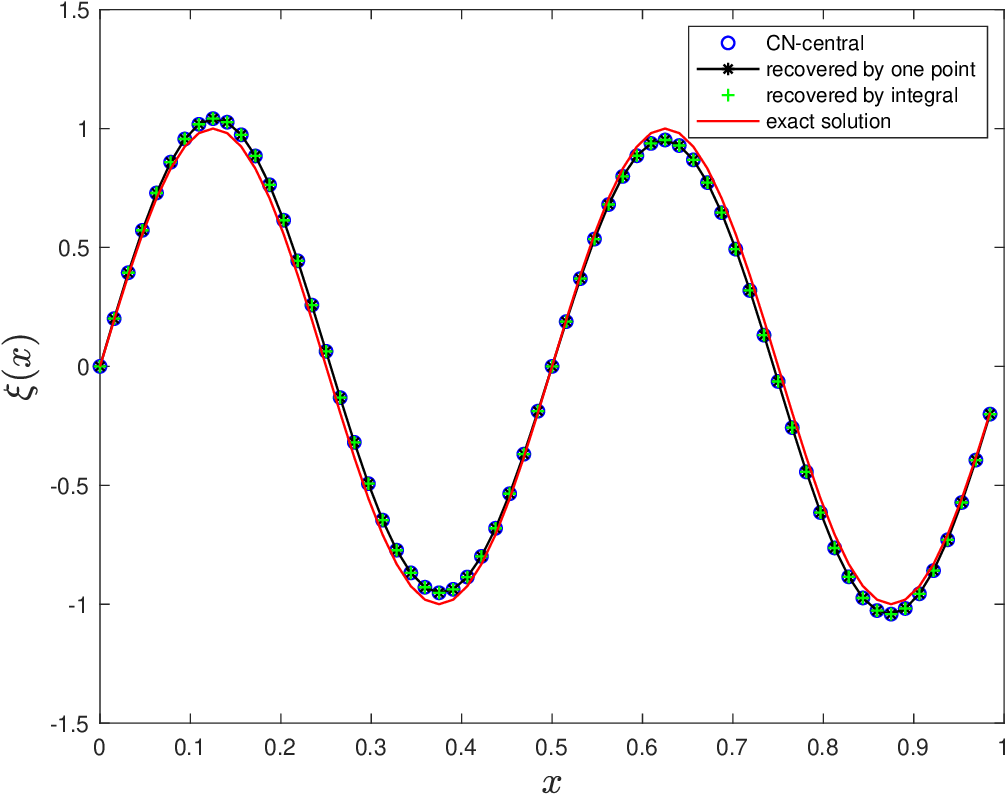}}
    \subfigure[$\epsilon(x, T)$]{
        \includegraphics[width=0.3\linewidth]{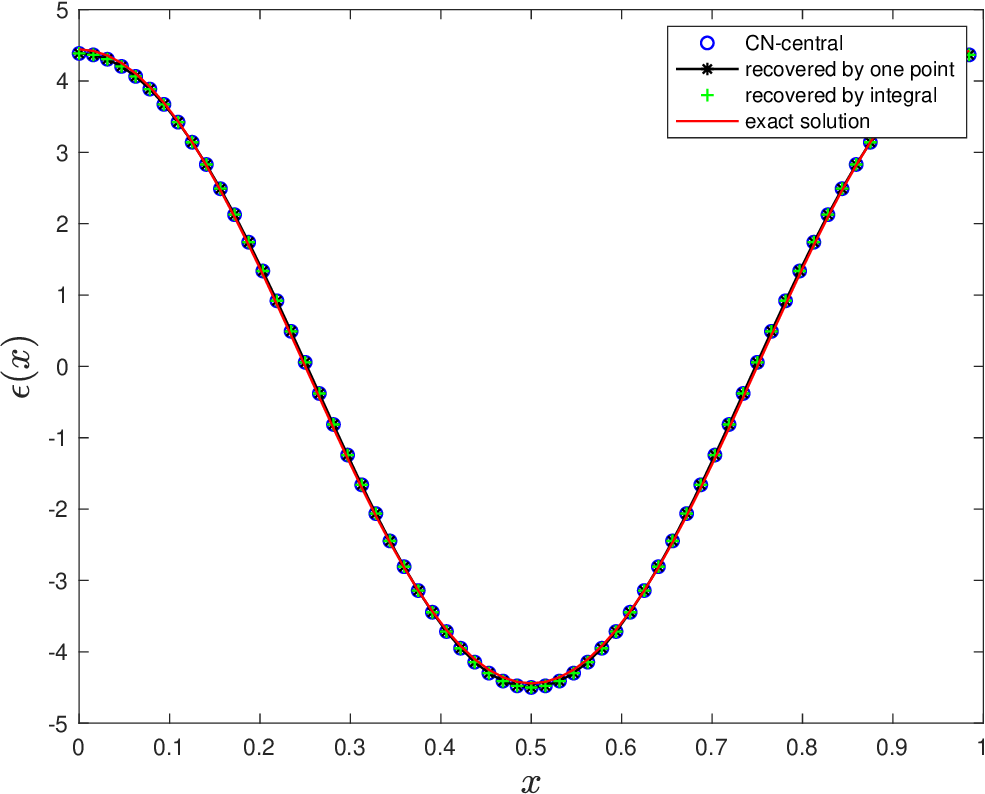}}
    \subfigure[$p(x, T)$]{
        \includegraphics[width=0.3\linewidth]{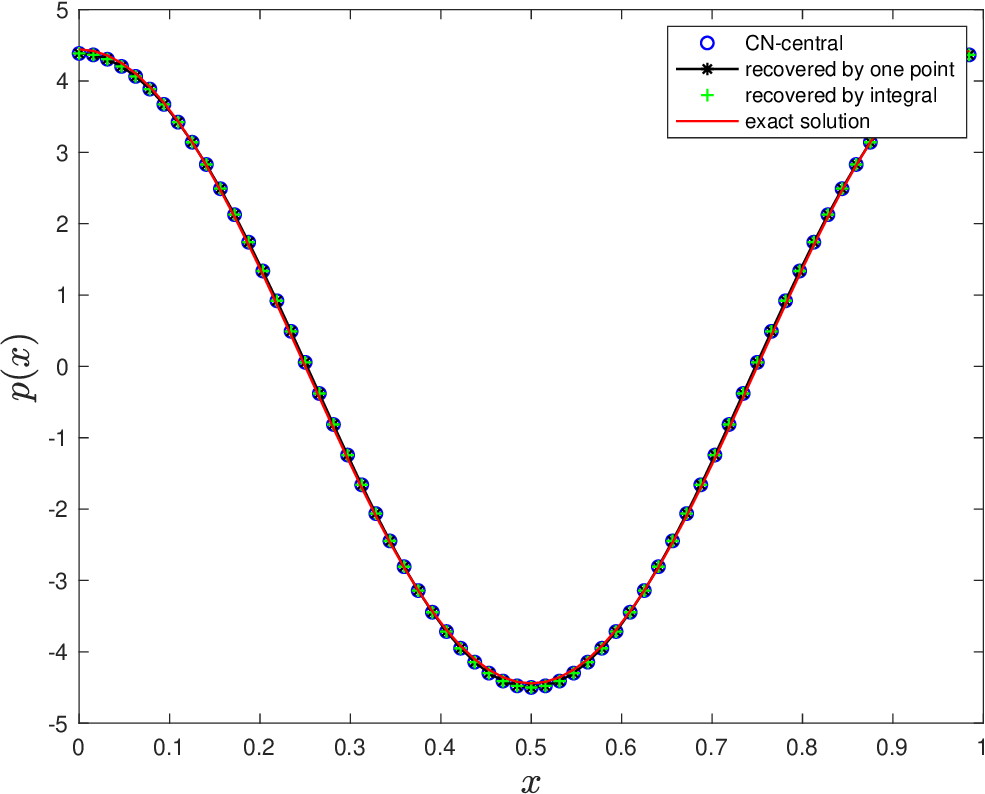}}
    \subfigure[$\xi(x, T)$]{
        \includegraphics[width=0.3\linewidth]{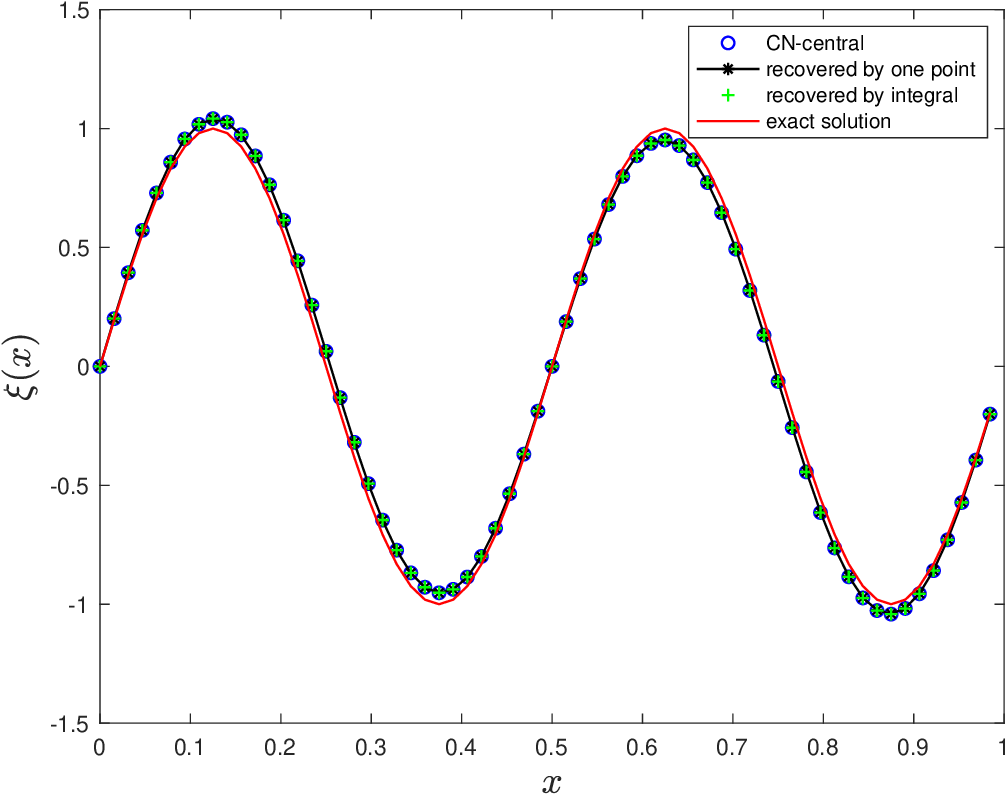}}
    \subfigure[$\epsilon(x, T)$]{
        \includegraphics[width=0.3\linewidth]{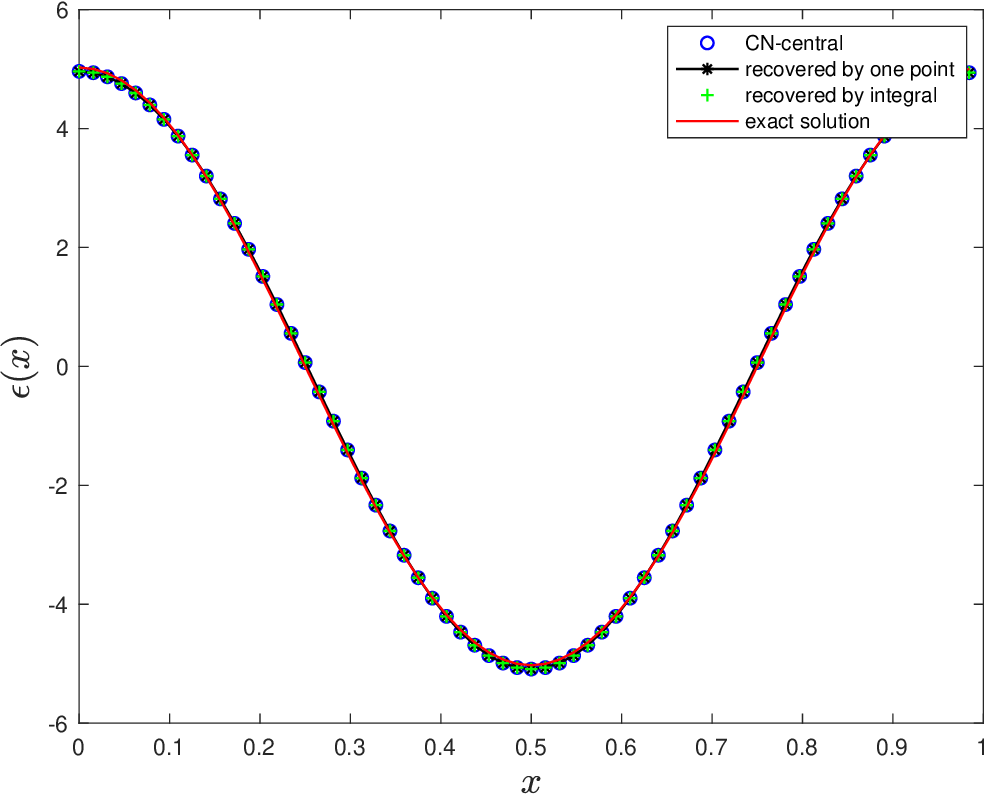}}
    \subfigure[$p(x, T)$]{
        \includegraphics[width=0.3\linewidth]{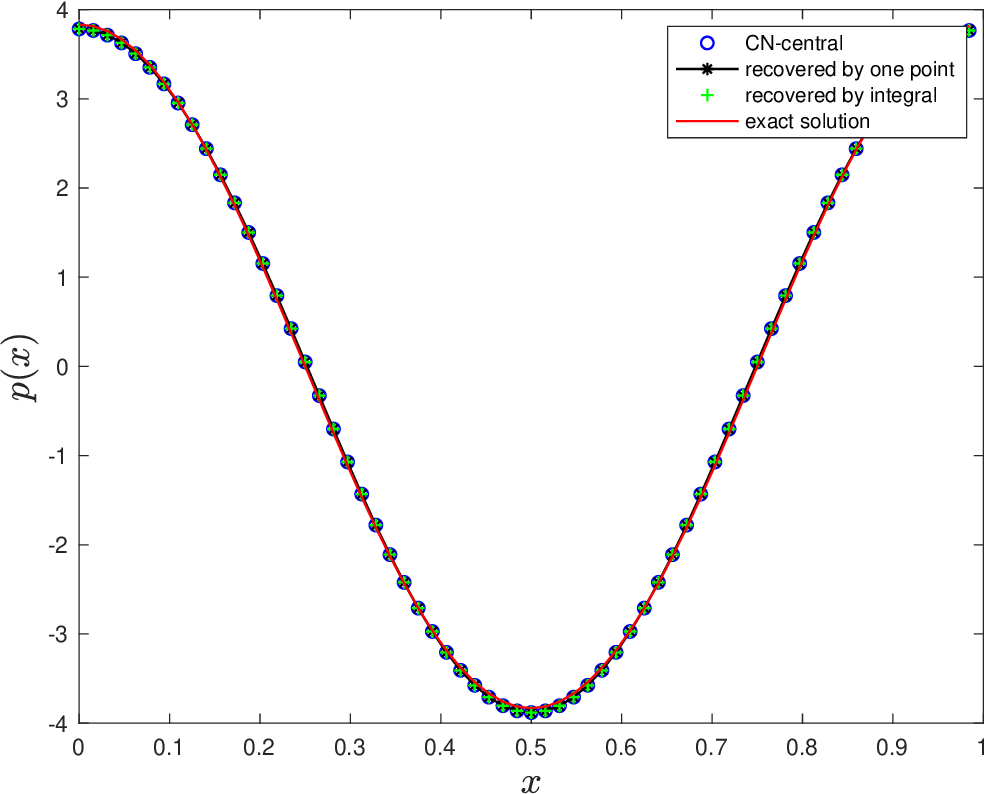}}
    \caption{Results of the hyperbolic system at $T=1$ using the Schr$\ddot{\text{o}}$dingerisation with the central difference method. The first row: using the first row parameters in Tab. \ref{Tab:parameter} ; The second row: using the second parameters in Tab. \ref{Tab:parameter}.}\label{fig3}
\end{figure}

\section{Conclusion}
In this paper, we develop a quantum simulation framework for elastic wave equations using the Schr$\ddot{\text{o}}$dingerisation method, addressing both the velocity-stress equations (expressed as symmetric hyperbolic systems) and the wave displacement equations (transformed into higher-dimensional hyperbolic systems). By mapping these classical PDEs to quantum Hamiltonian systems, we analyze the gate complexity by Schr$\ddot{\text{o}}$dingerisation method. Our results establish a theoretical foundation for quantum-accelerated modeling of elastic wave propagation, with potential applications in geophysical exploration and materials science.

Realistic wave propagation often involves complex boundary value problems and unbounded domains. Our future efforts will focus on quantum algorithms for such generalized scenarios and derive the corresponding quantum circuit for practical simulations.

\newpage

\bibliographystyle{alpha}
\bibliography{sample}

\end{document}